\theoremstyle{plain}
\newtheorem{thm}{Theorem}[]
\newtheorem*{thm*}{Theorem}
\newtheorem{lem}[thm]{Lemma}
\newtheorem{prop}[thm]{Proposition}
\newtheorem{cor}[thm]{Corollary}
\theoremstyle{remark}
\newcommand{\sumab}[2]{\underset{#1}{\overset{#2}{\sum}}}
\newcommand{\suma}[1]{\underset{#1}{\sum}}
\begin{document} 

\title{Exact requirements for battery-assisted qubit gates}

\date{\today}

\author{Riccardo Castellano} \email{riccardo.castellano@unige.ch}
\affiliation{Scuola Normale Superiore, 56126 Pisa, Italy}
\affiliation{Dipartimento di Fisica dell’Universit\`a di Pisa, Largo Pontecorvo 3, I-56127 Pisa, Italy}
\affiliation{Department of Applied Physics, University of Geneva, 1211 Geneva, Switzerland}

\author{Vasco Cavina}
\email{vasco.cavina@sns.it}
\affiliation{Scuola Normale Superiore, 56126 Pisa, Italy}

\author{Martí Perarnau-Llobet}
\affiliation{F\'isica Te\`orica: Informaci\'o i Fen\`omens Qu\`antics, Department de F\'isica, Universitat Aut\`onoma de Barcelona, 08193 Bellaterra (Barcelona), Spain}
\affiliation{Department of Applied Physics, University of Geneva, 1211 Geneva, Switzerland}

\author{Pavel Sekatski}
\affiliation{Department of Applied Physics, University of Geneva, 1211 Geneva, Switzerland}

\author{Vittorio Giovannetti}
\affiliation{Scuola Normale Superiore, NEST, and Istituto Nanoscienze-CNR, 56126 Pisa, Italy}

\begin{abstract}

We consider the implementation of a unitary gate on a qubit system $S$ via a global energy-preserving operation acting on $S$ and an auxiliary system $B$ that can be seen as a battery. We derive a simple, asymptotically exact expression for the implementation error as a function of the battery state, which we refer to as the {\it Unitary Defect}. Remarkably, this quantity is independent of the specific gate being implemented, highlighting a universal property of the battery itself.  We show that minimizing the unitary defect, under given physical constraints on the battery state, is mathematically equivalent to solving a Lagrangian optimization problem, often corresponding to finding the ground state of a one-dimensional quantum system.
Using this mapping, we identify optimal battery states that achieve the highest precision under constraints on energy, squared energy, number of levels and Quantum Fisher Information. Overall, our results provide an efficient method  
for establishing  bounds on the physical requirements needed to implement a unitary gate via energy-preserving operations and for determining the corresponding optimal protocols.

\end{abstract}

\maketitle

The development of quantum information theory has prompted a renewed examination of symmetries and conservation laws, introducing an operational perspective
in which they are viewed as constraints that limit our ability to control and measure quantum systems~\cite{ReferenceFrames,
QResourceTh,Lipka-Bartosik2024}.
This concept was recognized as early as the 1950s, notably through the Wigner–Araki–Yanase (WAY) theorem \cite{(W)AYTh, W(A)YTh, WA(Y)Th}, which establishes fundamental limits on the realization of projective measurements in the presence of conserved quantities. Subsequent studies have shown that, in such scenarios, the measurement apparatus must be in a strongly asymmetric state in order to perform the measurement effectively \cite{WAY2006, ReferenceFrames, Ahmadi2013, EnergyMesuramentNavascues, WAY2008, WAYUnbounded(Tajima), WAY2011}.
Beyond measurements, conservation laws constrain our ability to manipulate quantum systems, affecting tasks ranging from state preparation~\cite{Marvian_2013StateCenversions, IIDNonAbelian, IID-Rates}, to unitary operations~\cite{VarRequest(2002), VarRequest(2003), VarRequest(general2009), LowerBoundNotgate(2007), Aberg2014, Tajima2018}, general quantum channels~\cite{ChannelsCost(Tajima)}, and work extraction in quantum thermodynamics~\cite{Lostaglio2015}.

A task of crucial importance, particularly in quantum computation, is the ability to perform unitary operations with a high precision.  
 Under the assumption that energy conservation is a fundamental symmetry, the implementation of non-energy-preserving gates (NEPGs) on a system 
$S$ requires the assistance of an auxiliary system $B$ - commonly referred to as a battery in quantum thermodynamics.
Following seminal papers on the topic \cite{VarRequest(2002), VarRequest(2003), VarRequest(general2009), LowerBoundNotgate(2007)} significant progress has been made in identifying necessary conditions that the battery must fulfil to achieve a given level of precision, i.e. generate an evolution of $S$ that approximates the desired gate sufficiently well. In particular, Quantum Fisher Information (QFI), average energy and coherent entropy of energy have been recognized as essential resources in this context~\cite{Chiribella, Chiribella2, WeakQFIbound, QFIBound, SaturatingWAY(Gaussian),EntropicCoherence(2025)}.
Nevertheless, a universal criterion to understand if a given state of the battery is sufficient to power the implementation of a given NEPG has not yet been found. Likewise, the form of the optimal state of the battery 
remains unknown. The goal of this paper is to address these open questions. 
In the first part of the paper we find a direct expression for the error with which battery states implement any gate $V_{S}$ in a two-level system, and show that is proportional to a new quantity, which we call {\it Unitary Defect} (UD).
Then, we explicitly identify the states that minimize the {\it Unitary Defect} under physically meaningful constraints on the initial state of the battery, namely, when it is prepared with a limited amount of energy or coherence.
 Finally, using these results, we formulate a new set of necessary conditions expressed through refined inequalities in which the gate error is bounded in terms of the battery’s initial resources. These conditions are shown to be stricter than those previously established in the literature. From a practical perspective, they show that semi-classical pulses are inefficient in terms of the error–energy trade-off, suggesting that the use of quantum batteries could enhance this aspect of quantum information processing.
 Our methods can be applied also to $d$-level systems with equally spaced energy levels, as discussed in App. F of \cite{SM}. 
 
 \begin{figure}
    \includegraphics[width=0.9\linewidth]{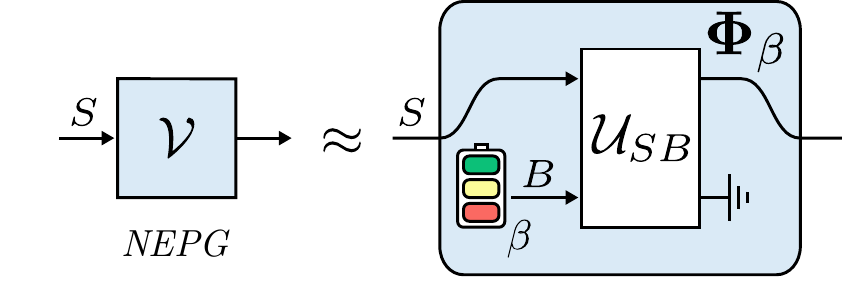}
    \caption{Schematic representation of the  approximated implementation of a non-energy-preserving gate (NEPG) $V_S$ on the system $S$ through  
    a joint 
    total-energy-preserving unitary transformation     $U_{SB}$ with the battery $B$ initialized in the input state $\beta_B$. }
    \label{fig:1}
\end{figure}
{\it  Mathematical formulation:--} Given a system $S$ with Hamiltonian ${H}_S$, our goal is to implement a generic   unitary gate ${V}_S$ using energy-preserving operations. 
If we treat $S$ as an isolated unit, this requirement restricts us to energy-preserving gates such that $[{V}_S, {H}_S] = 0$. To overcome this issue, we adopt a perspective where energy is conserved globally within a composite system consisting of $S$ and a battery $B$, with Hamiltonian ${H}_B$.
In this framework the most general evolution (or completely positive, trace preserving (CPTP) map) of $S$ takes the form~\cite{OpenQ.SystemsPetruccione} 
\begin{align} \label{Def:MainChannel}
      \mathbf{\Phi}_{\beta}(\cdots): =\mbox{Tr}_B[ {U}_{SB}(\cdots\otimes {\beta}_B){U}^\dag_{SB}], 
\end{align}
with $\beta_B$ the initial state of the battery, 
and with $U_{SB}$ a global unitary transformation
which commutes with the total Hamiltonian of the model, i.e. $[{U}_{SB},{H}_{S} + {H}_B] =0$.
Since we aim to implement a unitary gate $V_S$, we look for choices of ${\beta}$ and ${{U}_{SB}}$ granting that
$ \mathbf{\Phi}_{\beta}  \approx \mathcal{V}:={V}_S \cdots {V}_S^{\dag}$ (see Fig.~\ref{Def:MainChannel}). This requirement can be put on formal grounds by demanding the Choi infidelity $\epsilon_C(\mathbf{\Phi}_{\beta} ,\mathcal{V})$, a dissimilarity measure for quantum channels, to be small. 
The latter can be written in a compact form in terms of the Kraus operators~$\{K_S^{(k)}\}_k$ \cite{OpenQ.SystemsPetruccione} of the map~$\Phi_{\beta}$ \cite{ChoiFidGates}
 \begin{equation} \label{eq:choinf}
     \epsilon_{C}(\mathbf{\Phi}_{\beta},\mathcal{V})=1-\frac{1}{d^{2}}\underset{k}{\sum} \abs{\mbox{Tr}[V_S^{\dag}{K_S^{(k)}} ]}^{2}.
 \end{equation}     
Note that the precision in \cite{Chiribella, Chiribella2, WeakQFIbound, QFIBound} is given in terms of another function, the worst-case infidelity. However,  for finite-dimensional systems one can show (see App. A in~\cite{SM}
) that the two are equivalent in the following sense
 \begin{equation} \label{wcVSChoiFid}
\frac{\epsilon_{wc}(\mathbf{\Phi}_{\beta},\mathcal{V})}{d}\leq \epsilon_{C}(\mathbf{\Phi}_{\beta},\mathcal{V})\leq \epsilon_{wc}(\mathbf{\Phi}_{\beta},\mathcal{V}),
\end{equation}
where $d$ is the dimension of the system. To the best of our knowledge, the first bound in Eq. \eqref{wcVSChoiFid}, saturable up to a factor of 2, is a new result of independent interest.

{\it Calculation of the Choi infidelity:--} Using a similar approach to \cite{CataliticCoherence} we will assume that the system is a qubit with gap $\omega$ and the battery is a harmonic oscillator of the same frequency. 
This resonant condition is essential to ``activate" all the levels of the battery, i.e. make ${U}_{SB}$ exchange energy between $S$ and every couple of adjacent levels of $B$.
Summarizing, setting $\hbar =1$, we have
\begin{align} \label{eq:hams}
     &{H}_{S}=\frac{\omega}{2} (\underset{S}{\ketbra{1}{1}} - \underset{S}{\ketbra{0}{0}}),   \;\;  {H}_{B}= \omega \underset{n=0}{\overset{\infty}{\sum}} n\underset{B}{\ketbra{n}{n}},
    \end{align}
    with $|0\rangle_S$ and $|1\rangle_S$ the ground and excited state of $S$ and 
 $\ket{n}_B$  the $n$-th energy eigenvector of $B$.
    Since $\epsilon_{C}(\mathbf{\Phi}_{\beta},\mathcal{V})$ is linear in $\beta_B$ we assume the latter to be pure and denote with $\beta_n$ its amplitude on the $n$-th energy eigenvector. 
Finally, for a given target unitary operation ${V}_S$ we fix the global system-battery unitary in 
Eq.~\eqref{Def:MainChannel} as in \cite{Ergotropy, EnergyMesuramentNavascues, Aberg2014, Chiribella} to be
\begin{align} \notag
{U}_{SB} & = \underset{S}{\ketbra{0}{0}} \otimes \underset{B}{\ketbra{0}{0}}
 +\underset{n>0}{\overset{\infty}{\bigoplus}}\; {U}^{(n)}_{SB}, \\   {U}^{(n)}_{SB}  &:= \sum_{i,j=0,1} \underset{B}{\ketbra{n-i}{n-j}} \otimes \underset{S}{\ketbra{i}{j}}V_{ij}, \label{eq:usbUP}
\end{align}
where $V_{ij}:={_S\!}\bra{i} {V}_{S}\ket{j}_{S}$,
The choice of $U_{SB}$ is dictated by the energy preservation requirement: in the ground state of the $SB$ system the dynamics is necessarily trivial, while the evolution acts block-diagonally on each two-dimensional degenerate energy subspace of $SB$. Within these subspaces, we choose to \lq\lq copy" the target gate $V_{S}$ - we prove the optimality of this choice for real positive amplitude $\beta_n$ (as complex phases of any pure battery can be cancelled by an energy-preserving unitary) 
in the $\epsilon_{C}\rightarrow 0$ limit in appendix C of~\cite{SM}.  
From Eqs. \eqref{eq:choinf}, \eqref{eq:usbUP} we obtain an exact expression for the Choi infidelity
\begin{align} \notag
& \epsilon_C(\mathbf{\Phi}_{\beta},\mathcal{V}) = |V_{01}|^2 \sum_{n=1}^{\infty} \big[  |\beta_{n+1}  - \beta_n|^2  \\ & - \frac{|V_{01}|^2}{4}   |\beta_{n+1} + \beta_{n-1} - 2 \beta_n|^2 \big] + \Delta(\beta_{0},\beta_{1},V_{S}) \label{eq:finalchoi}
\end{align}
where $\Delta(\beta_{0},\beta_{1},V_{S})$ is a  \lq \lq boundary" term. The derivation of Eq. \eqref{eq:finalchoi} is done in App. C of~\cite{SM}.
In the following, we will assume $\beta_0 = 0$, which is a reasonable choice, as we expect the ``inert" ground state of the battery not to contribute to powering the implementation of NEPGs. 
This choice allows us to neglect the $\Delta$-term \footnote{More precisely, $\Delta$ can be neglected for all \lq\lq smooth'' states as shown in App. C of the supplemental material.} in Eq.\eqref{eq:finalchoi}.\\

{\it Optimizing the precision with a fixed number of battery levels:--} \label{Sec:MinChoi-FIX-N} 
We begin our analysis by looking for battery states that achieve the minimum Choi infidelity under the restriction that they are supported on the first $N$ levels of the battery, i.e. $\beta_{n}=0$ for $n\geq N $. An exact optimization of \eqref{eq:finalchoi} based on a discrete sine Fourier transform (see App. D 
in~\cite{SM}) reveals that the (asymptotically in $N\rightarrow \infty$) optimal states in this case are of the form
\begin{equation} \label{Eq:OptNstate}
\ket{\beta^{(N)}}_B:=C_{N}\underset{n=1}{\overset{N-1}{\sum}}\sin(\frac{\pi n}{N} )\ket{n}_B, 
\end{equation}
where $C_{N}$ is a normalization factor. 
The associated value of the Choi infidelity is
\begin{equation} \label{Eq:optNUD}
\frac{\epsilon_C(\mathbf{\Phi}_{|\beta^{(N)}\rangle},\mathcal{V})}{|V_{01}|^{2}}  = 4
\sin^2(\frac{\pi}{2N})
(1 - 
\sin^2(\frac{\pi}{2N})
|V_{01}|^2).
\end{equation} To reach arbitrarily small values of the infidelity we need $\sin^2(\frac{\pi}{2N}) \rightarrow 0$, i.e. $N \rightarrow \infty$. In this limit, the optimality of $\ket{\beta^{(N)}}_B$ and Eq.~\eqref{Eq:optNUD} implies a bound on the number  $N$ of occupied levels that is necessary to achieve a desired precision $\epsilon_{C}$: 
\begin{equation}
    N \geq \frac{ \pi |V_{01}|}{\sqrt{\epsilon_C}}+o(1). \label{eq:bound0}
\end{equation}
This is the first main result of the paper: if we want to implement any NEPG with infidelity $\epsilon_C$ we need a battery with a number of levels scaling as $N \sim 1/ \sqrt{\epsilon_{C}}$.

{\it A new figure of merit in NEPG optimization:--}
Remarkably, the amplitudes characterizing the optimal solutions $\ket{\beta^{(N)}}_B$ are {\it smooth functions} in the large $N$ limit, i.e. they are such that the discrete second derivative in Eq.~\eqref{eq:finalchoi} is negligible [of order $O(N^{-4})$ = $O(\epsilon_C^2)$].
In the rest of the manuscript, when considering more complicated optimizations (namely, with constraints on the initial state of the battery) we simplify our analysis by introducing a scale parameter $\delta>0$ 
which naturally generalize the role of $N^{-1}$ in Eq.\eqref{Eq:OptNstate} for functions which have non compact support. More specifically, 
we take the following ansatz for the battery amplitudes
\begin{equation}
  \beta_n^{(\delta)} := C_{\delta}\psi(n\delta) ,  \quad C_{\delta} := \left(\underset{n}{\sum} \abs{\psi(n \delta)}^{2}\right)^{-\frac{1}{2}},  \label{eq:ansatz}
\end{equation}
with $\lVert \psi \rVert_{2}=1$,
 \( \abs{\psi'(x)}, \abs{\psi''(x)} < \infty \) \footnote{Divergence of $\psi''(x)$ in a finite number of points is also allowed. In most cases is best to find the optimum in $\psi \in L_{2}[0,\infty]$ and verify a posteriori that the sum converges to the integral.}. Here, $\psi$ defines the overall \lq\lq shape" of the battery amplitudes over the energy spectrum, and the parameter $\delta$ shrinks this shape with respect to the discrete ladder of energy levels spaced by $\omega$.
As a preliminary observation, if \( \psi(0) \neq 0 \) we have \( \epsilon_{C} \propto \delta \), while any function vanishing in zero guarantees \( \epsilon_{C} \propto \delta^{2} \). This confirms that under the ansatz \eqref{eq:ansatz} choosing $\beta_0 =0$, i.e. \( \psi(0) = 0 \), is a well-motivated choice (see App. C in~\cite{SM}
).
Under these assumptions the Choi infidelity takes an elegant analytical form in the $\delta \rightarrow 0$ limit
\begin{align} 
  \frac{\epsilon_C(\mathbf{\Phi}_{|\beta^{(\delta)}\rangle},\mathcal{V})}{|V_{01}|^2}  &=
   \sum_{n=0}^{\infty}|\beta_{n+1}^{(\delta)}  - \beta_n^{(\delta)}|^2 +O\left(\delta^{4} \right) \nonumber\\
  &=:  \delta^{2}\,  {\rm UD}(\psi) +O(\delta^{3})\label{eq:UDdef}
\end{align}
where we introduced the {\it Unitary Defect} (UD) 
\begin{align}\label{eq:UDdef2}
{\rm UD} (\psi) = \int_0^{\infty} |\psi'(x)|^2 \dd x.
\end{align} 
Remarkably, the UD is independent of the gate, making it a genuine and universal measure of the quality of a quantum battery for the implementation of high precision  ($\delta, \epsilon_C \rightarrow 0$) NEPGs.Note that for a mixed battery state $\beta_{B}= \sum_i p_i \ketbra{\beta_i}_B$, Eqs.(\ref{eq:UDdef},\ref{eq:UDdef2}) remain valid upon using the average unitary defect $\sum_{i} p_i \, UD(\psi_i)$. 
Thus, the averaged UD is a {\it bona fide} quantifier of the quality of the battery state for generic initial preparations.

{\it Optimizing the precision with limited resources:--} \label{Sec:Optimizing resources}
In the following, we analyse in detail how to optimize battery states when only a limited amount of a given resource is available \cite{Chiribella, QFIBound,UniversalTheoremTakaji2022}.
In the context of NEPGs, a resource is a function $\mathcal{R}$ of the battery state  that satisfies certain properties \cite{Chiribella, UniversalTheoremTakaji2022}, such as monotonicity under energy-preserving operations and regularity with respect to trace distance or fidelity. 
Typical resources are functions of the modulus of the battery state coefficients
\begin{equation} \label{Eq:Def-resource} 
    \mathcal{R}(|\beta\rangle_{B})=\underset{n=0}{\overset{\infty}{\sum}} R(|\beta_{n}|,n),
\end{equation}
with $|\partial_{|\beta_{n}|}R(|\beta_{n}|,n)| < \infty $. Considering resources of the form~\eqref{Eq:Def-resource} we are interested in the optimal state
\begin{equation}
 |\beta_{\mathcal{R}}\rangle_B \coloneqq  \underset{\mathcal{R}(|\beta\rangle)\leq R_0}{ \arg \min} \epsilon_C(\mathbf{\Phi}_{|\beta\rangle},\mathcal{V}).
\end{equation}
For sufficiently regular resources the optimal states can be expected to be regular as well, thus, we will use the ansatz in Eq.~\eqref{eq:ansatz} and replace the sum in Eq.~\eqref{Eq:Def-resource} with an integral
\begin{equation} \label{eq:contres}
 \sum_{n=0}^{\infty} R(|\beta_{n}^{(\delta)}|,n) = \int_0^{\infty} R\left(C_{\delta}|\psi(\delta x)|,x\right) \dd x +O(\delta).
\end{equation}
The search for the optimal state can then be cast as a variational problem. We want to find the function $\psi_{\mathcal{R}}$ which solves
\begin{align} \label{Eq:MinProb-UD}
   \min_{\psi} \;  {\rm UD} (\psi)
   \, \,\, \text{s.t.}\, \, \,  \mathcal{R}(\psi)\leq R_0,
\end{align}
where $R_0$ is the maximum amount of the desired resource available in the battery.
This can be approached with the standard methods from variational calculus --introducing a Lagrange multiplier $\lambda$ and minimizing the corresponding action

\begin{align}
\mathcal{S}_{\mathcal{R}}& = \int_0^{\infty} \dd x\, \big\{|\psi'(x)|^{2}+ \lambda R(\sqrt{\delta}|\psi(\delta x)|,x) \big\},
\label{UD-costriant}
\end{align}
with the constraints $ \psi(0) = 0$ and $ \lVert \psi \rVert_{2} = 1$, and where we used $C_{\delta}=\sqrt{\delta}+o(1) $.
 
Solving the variational problem gives us the optimal pure state and an achievable lower bound to the minimum Choi infidelity. 
In many relevant cases $\mathcal{R}$ is linear in the state, ruling out the possibility that mixed states can outperform pure ones. This possibility can also be excluded for certain convex resources, such as the Quantum Fisher Information; see App. E in~\cite{SM}
for details.

\begin{table*}
\begin{tabular}{ |p{3cm}||p{6cm}|p{4cm}| }
 \hline
 \quad \quad \,Target & \, \, \,Optimal Battery Preparation & \, \quad Asymptotic minimum \\
 \hline
\centering Fixed $\langle E\rangle$; Min UD  &\centering  \(\ket{\beta_{\langle E\rangle}}:=C_{E}\underset{n>0}{\overset{\infty}{\sum}} Ai(\frac{n\omega \eta'}{\langle E\rangle}-x_{0}) \ket{n}_{B} \) & \quad \,  \({\rm UD}_{min}(\langle E\rangle )= \eta^{2} \frac{\omega^{2}}{\langle E\rangle^{2}}\) \\
\hline
\centering Fixed $\langle E^2\rangle $; Min UD  & \centering \(\ket{\beta_{\langle E^2\rangle }}:=C_{E^{2}} \underset{n>0}{\overset{\infty}{\sum}} \psi_{1}\left(n \sqrt{\frac{3\omega^{2}}{2\langle E^2\rangle }}\right) \ket{n}_{B}\) & \quad \,  \, 
\({\rm UD}_{min}(\langle E^2\rangle )= \frac{9}{4}\frac{\omega^{2}}{\langle E^2\rangle } \) \\
\hline
\centering Fixed $N$; Min UD & \centering $ \ket{\beta^{(N)}}:=C_{N}\underset{n>0}{\overset{N-1}{\sum}} \sin(\frac{n \pi }{N})\ket{n}_B $ & \quad \, 
\({\rm UD}_{min}(N)=\frac{\pi^{2}}{N^{2}} \)\\
 \hline
\end{tabular}
\caption{Solutions for the minimization of UD for fixed battery resources. In the third column, we used the shorthand notation ${\rm UD}(\mathcal{R})$ to indicate the value of the unitary defect on the optimal preparations of the battery, i.e. ${\rm UD}(\psi_{\mathcal{R}})$ for $\mathcal{R} = \langle E\rangle, \langle E^2 \rangle, N$.
The values $\eta^{2} \approx1.888, \eta'\approx 1.557 $ are obtained from the integration of Airy related functions and $C_{\langle E\rangle},
C_{\langle E^2 \rangle},C_{N}$ are normalization coefficients (see App. E in~\cite{SM}
). }
\label{tab:1}
\end{table*}

{\it Mapping to 1D ground state problems:--}
It is worth noticing that for many resources the problem above is mathematically equivalent to finding the ground state of a 1D particle. 
For resources like the average energy $ \langle E\rangle := \mbox{Tr}[H_B \beta]$ and the average squared energy $ \langle E^2\rangle :=\mbox{Tr}[H_B^2 \beta] $ we have
$R_{\langle E\rangle}(|\psi(x)|,x) = \omega x|\psi(x)|^2$ and $R_{\langle E^2\rangle}(|\psi(x)|,x)= \omega^{2} x^{2}|\psi(x)|^2$, respectively, so using the vanishing boundary condition and integration by parts we get
\begin{align} \label{eq:ground1D}
\mathcal{S}_{\mathcal{R}}\hspace{-1mm}=\hspace{-1mm} \int_0^{\infty} \hspace{-2mm}\dd x\, \psi(x) \left[  -  \delta^{2} \frac{\dd^2}{\dd x^2}  +\lambda \left(\frac{x \omega}{\delta} \right)^{m_{\mathcal{R}}} \right] \psi^*(x),
\end{align}
where $m_{\mathcal{R}} = 1,2$ for $\mathcal{R} = \langle E\rangle, \langle E^2\rangle$, respectively. 
We observe that the variational problems corresponding to these two resources are equivalent to ground-state problems for a particle in a one-sided linear and quadratic potential, respectively.
The solutions are textbook \cite{sakurai2020modern} and
the optimal functions $\psi_{\langle E \rangle},  \psi_{\langle E^2 \rangle}$ are respectively given by $Ai(x)$, the Airy function of the first kind and $\psi_1(x)$, the first excited wave function of a Harmonic oscillator.
After fixing $\lambda(\delta)$ we obtain the optimal battery states, one for each problem, all summarized in Tab. \ref{tab:1} (see App. E in~\cite{SM}
).

From the table, we can immediately derive a new set of bounds that improve upon the results found in previous literature \cite{Chiribella, CataliticCoherence}. Indeed, since the minimum averaged (squared) energy is given, in the limit of high precision, by the first two cells of the third column of  Tab. \ref{tab:1}, we find the following saturable necessary conditions
\begin{align} \label{eq:bound1}
    & \langle E\rangle\geq \eta \frac{\omega |V_{01}|^{2}}{\sqrt{\epsilon_{C}}} + o\left(\frac{1}{\sqrt{\epsilon_{C}}} \right), \\
    &  \langle E^2\rangle\geq \frac{9}{4} \frac{\omega^{2} |V_{01}|^{2}}{\epsilon_{C}} + o\left(\frac{1}{\epsilon_{C}} \right) ,\label{eq:bound2}
\end{align}
where $\epsilon_{C}$ is a shorthand notation for $\epsilon_{C}(\mathbf{\Phi}_{\beta},\mathcal{V})$
and $\eta \approx 1.374$ can be calculated from integrals of Airy-related functions. 
In App. E in \cite{SM} we solved the problem of optimal initial state also for a battery prepared with fixed initial QFI. This is remarkable, as in contrast to energy and squared energy the QFI is strictly convex in the state. 
We do not report the results in the main text, since the resulting bound has already been obtained in  \cite{QFIBound} in terms of the worst case infidelity
 $\epsilon_{wc}$. However, the derivation is interesting and highlights that the optimal state of the battery for this task can be obtained by mapping the problem in the ground state problem of a harmonic oscillator.

{\it Application in real devices:--} The bounds in Eqs. \eqref{eq:bound0}, \eqref{eq:bound1}, \eqref{eq:bound2} are derived in the asymptotic limit $\epsilon_C \rightarrow 0$, i.e. for high precision gates where the ansatz \eqref{eq:ansatz} and the unitary \eqref{eq:usbUP} are proved to yield optimal results.
Therefore, there is no choice of interaction or state preparation that can outperform \eqref{eq:bound0}, \eqref{eq:bound1}, \eqref{eq:bound2}. In real devices, it is more insightful to reformulate the bounds in terms of the {\it intrinsic error} that accompanies  limited-resources computation, as
\begin{equation}
    \frac{\epsilon_{C}}{|V_{01}|^2 } \geq \max \left(\frac{ \eta^2 \omega^4 |V_{01}|^2}{\langle E\rangle^2}, \frac{9 \omega^2}{4 \langle E^2\rangle}, \frac{\pi^2}{N^2}\right)+ o\left(\epsilon_{C}\right),
\end{equation}
Where $\omega$ is the qubit gap energy and $\hbar=1$.
This tells us that, together with other typical sources of error that occur in quantum architectures (due to external noise, spurious excitations, undesired fluctuations in the control lines) there is a contribution to infidelity coming from the limited resources in the control pulses.

In many practical applications the driving is performed via semi-classical coherent pulses, that we can represent with a 
coherent battery state $\ket{\beta}_B= \sum_{n=1}^{\infty} \frac{\alpha^n}{n!}\ket{n}e^{-|\alpha|^2/2}$. 
By computing the precision achieved in this case by plugging the coherent state in Eq. \eqref{eq:finalchoi} one finds that the scaling of gate-implementation error with energy is sub-optimal 
$\epsilon_C \geq  \frac{ \omega |V_{01}^2|}{4 \langle E\rangle }$ with $\langle E \rangle = \omega |\alpha|^2$.
This suggests that a quantum battery approach could enhance the energy/precision trade-off if compared to semi-classical pulses
(See App. E.4 in \cite{SM} for details).


 A comment is mandatory on the comparison of the interaction unitary in Eq. \eqref{eq:usbUP} with more experimentally feasible models.
 In particular, note that  $U_{SB}$ has a ``flat interaction strength'', i.e. $U^{(n)}_{SB}$ is the same in all subspaces labelled by $n$. This highlights an important limitation of natural interaction models. For instance, the Jaynes-Cummings Hamiltonian $H_{JC} \propto \sum_n  \sqrt{n}(\ketbra{n+1}{n}_B\otimes \ketbra{0}{1}_S + h.c.)$ suffers from the $\sqrt{n}$ prefactor.  The latter has a negligible effect $\sqrt n \approx |\alpha|\big (1+O(1/|\alpha|)\big)$ for bright coherent cavity-modes $\ket{\alpha}$, but becomes the bottleneck when minimizing the energy of the battery.
 In this context, the engineering of intensity-dependent Jaynes–Cummings interactions \cite{IntensityDependentJC2022} emerges as a promising energy-conserving interaction model, and could attract renewed interest.

{\it Conclusions :--}
We studied the implementation of non energy-preserving unitary gates powered by external batteries. We proved that, for high-precision gate implementation, a new quantity, dubbed the {\it Unitary Defect}, is fundamentally linked to precision.
Using the UD, we optimized the initial state of the battery in various scenarios, including maximizing the precision for given initial battery resources and for batteries with a fixed number of levels.
Our approach yields three new bounds that relate implementation precision to the battery’s initial average energy, its average squared energy, and the number of accessible energy levels. These results improve upon previous bounds \cite{Chiribella, CataliticCoherence} and establish a general framework for efficiently optimizing the battery state.
Our results find natural implications in quantum thermodynamics and computing. In the active topic of quantum batteries~\cite{Campaioli2024,BatteryCap}, our work places the unitary defect as a complementary figure of merit to the ergotropy --a measure of the extractable work from a quantum state~\cite{allahverdyan2004maximal}. Likewise, the unitary defect becomes also a relevant quantity to assess the thermodynamic cost of  quantum gates~\cite{Chiribella,Auffeves2022}.

These findings pave the way to further research. A promising direction is the generalization to multi-qubit systems, where finding the optimal battery Hamiltonian and interactions presents substantial challenges due to the presence of spectral degeneracies in the Hamiltonian. However, in practical protocols where multi-qubit operations are often performed by a sequence of pulses addressing a single pair of levels at the time, our formalism naturally applies. In particular, in App. F.1 in \cite{SM} we show that in this case the Choi infidelity is given by \eqref{eq:UDdef} with the rhs multiplied by $\frac{2}{d}$, where $d$ is the total number of levels. Another interesting direction of research is to further investigate the interplay between the intrinsic error due to limited resources in the implementation process and other sources of error, and to precisely assess its relevance in practical devices. This would be particularly important for technological applications, especially when the precision–energy trade-off becomes critical \cite{silva2023classical}. 


{\it Acknowledgments :--}
RC and PS acknowledge support from  Swiss National Science Foundation (NCCR SwissMAP).
VC and VG acknowledge financial support by MUR (Ministero dell’ Università e della Ricerca) through the PNRR MUR project PE0000023-NQSTI. M.P.-L. acknowledges support from the Grant RYC2022-036958-I funded by MICIU/AEI/10.13039/501100011033 and by ESF+.

\bibliography{biblio}

\appendix
\begin{widetext}
\section{Choi infidelity} \label{sec:Choi-Infidelity}
The Choi infidelity we adopt in the manuscript is formally defined as 
\begin{equation} \label{eq:CHOIinfdef}
  \epsilon_{C}(\mathbf{\Phi}_{\beta}, \mathcal{V}):=
  1-F(\sigma_{\mathbf{\Phi}_{\beta}},\sigma_{\mathcal{V}}),
\end{equation}
where given  $\sigma$ and $\tau$  density matrices 
$F(\sigma,\tau):=\left(\tr\left[\sqrt{\sqrt{\sigma}\tau\sqrt{\sigma}}\right]\right)^{2}$ is the associated fidelity function, and where 
 $\sigma_{\mathbf{\Phi}_{\beta}}$ and $\sigma_{\mathcal{V}}$ are the 
 Choi states of the CTPT map $\mathbf{\Phi}_{\beta}$ and of the unitary gate $\mathcal{V}$ respectively, i.e. the density matrices \begin{eqnarray} 
\sigma_{\mathbf{\Phi}_{\beta}} &:=& \left( \mathcal{I}_{X} \otimes \mathbf{\Phi}_{\beta} \right)(\ketbra{\Psi_{\max}}{\Psi_{\max}})=\sum_{k} {K_S^{(k)}} \ketbra{\Psi_{\max}} K_S^{(k)\dag} \;, \\ 
\sigma_{\mathcal{V}} &:=& \left( \mathcal{I}_{X} \otimes \mathcal{V} \right)  (\ketbra{\Psi_{\max}}{\Psi_{\max}})
={V}_S\ketbra{\Psi_{\max}}{\Psi_{\max}}V^\dag_{S}, \label{defCHOIS} 
\end{eqnarray} 
 with $\ket{\Psi_{\max}}:=\frac{1}{\sqrt{d}}\underset{j=0}{\overset{d-1}{\sum}}\ket{j}_{X}\otimes \ket{j}_{S}$  a maximally 
entangled state of
the joint system formed by $S$ and by an auxiliary, $d$-dimensional space $X$,  and where 
$\mathcal{I}_{X}$ is the identity channel on $X$. 
We remark that the specific  choice of the local basis $\{ \ket{j}_{S}\}$ and $\{ \ket{j}_{X}\}$  which enters in the
definitions of  (\ref{defCHOIS}) 
  is  irrelevant in the evaluation of Eq.~(\ref{eq:CHOIinfdef}) and that
  Eq.~(2) directly follows observing that $\langle \Psi_{\max}| V_S^\dag   {K_S^{(k)}} |\Psi_{\max}\rangle =
  \mbox{Tr}[  V_S^\dag   {K_S^{(k)}}] /d$. 
 We also point out that $\epsilon_{C}(\mathbf{\Phi}_{\beta}, \mathcal{V})$  can be expressed in terms of expectation values with respect to the battery input state $\beta_B$ of a positive semidefinite operator $M_B$ that depends upon the interaction operator
 $U_{SB}$ and the target gate $V_S$. Specifically   we can write 
\begin{eqnarray} 
F(\sigma_{\mathbf{\Phi}_{\beta}},\sigma_{\mathcal{V}}) &=&\langle \Psi_{\max}| V^\dag_S \; \big( 
\mathcal{I}_{X} \otimes \mathbf{\Phi}_{\beta}\big)(\ketbra{\Psi_{\max}}{\Psi_{\max}}) \;  {V}_{S}  |\Psi_{\max}\rangle \nonumber \\
&=&  
\langle \Psi_{\max}|V^\dag_S  \; \mbox{Tr}_B[  U_{SB}  (\ketbra{\Psi_{\max}}{\Psi_{\max}}\otimes \beta_B) U^\dag_{SB}] \; {V}_{S}  |\Psi_{\max}\rangle
=\mbox{Tr}[ \beta_B M_B], 
\end{eqnarray} 
with 
\begin{eqnarray} 
M_B &:=& \langle \Psi_{\max}| U^\dag_{SB} {V}_{S}  |\Psi_{\max}\rangle\langle \Psi_{\max} | {V}^\dag_{S}U_{SB}  |\Psi_{\max}\rangle=Q_B^\dag Q_B, \\
Q_B&:=&\langle \Psi_{\max} | {V}^\dag_{S}U_{SB}  |\Psi_{\max}\rangle = \frac{1}{d} \sum_{j=0}^{d-1} {_S\langle} j| 
{V}^\dag_{S}U_{SB} |j\rangle_S =  \frac{1}{d} \sum_{j,j'=0}^{d-1} {_S\langle} j'| 
U_{SB} |j\rangle_S \; V^*_{j'j} , \label{defQ}
\end{eqnarray} 
with $V_{jj'} :=  {_S\langle} j|V_S|j'\rangle_S$ the matrix elements of $V_S$ w.r.t. to 
orthonormal  basis $\{ \ket{j}_{S}\}$  which enter in the definition of $\ket{\Psi_{\max}}$. Accordingly, we can write
  \begin{equation} \label{eq:CHOIinfdefrispQ}
  \epsilon_{C}(\mathbf{\Phi}_{\beta}, \mathcal{V})=1- \mbox{Tr}[ \beta_B M_B] = 
 \mbox{Tr}[ \beta_B ({I}_B-M_B)].
\end{equation}
As observed in the main text, by linearity it follows that minimization of $\epsilon_{C}(\mathbf{\Phi}_{\beta}, \mathcal{V})$ for fixed choices of $V_S$ and $U_{SB}$, can be restricted to only pure states of the battery. 
In particular, such minimum is attained by the eigenstate $|\bar{\beta}\rangle_B$ of $M_B$ with maximum eigenvalue $\lambda_{\max}$, i.e. 
  \begin{eqnarray} \label{eq:CHOIinfdefrispQmin}
 \min_{\beta_B}  \epsilon_{C}(\mathbf{\Phi}_{\beta}, \mathcal{V}) = 
  \min_{|\beta\rangle_B\in {\cal H}_B}  \epsilon_{C}(\mathbf{\Phi}_{|\beta\rangle}, \mathcal{V})  &=&  \epsilon_{C}(\mathbf{\Phi}_{|\bar{\beta}\rangle}, \mathcal{V})  =1- \lambda_{\max} , \qquad
  M_B |\bar{\beta}\rangle_B =\lambda_{\max} |\bar{\beta}\rangle_B. 
\end{eqnarray}
We conclude recalling that for the special case where $S$ is a qubit and $B$ is a harmonic oscillator 
the most general unitary interaction $U_{SB}$ that is energy preserving, i.e. $[U_{SB},H_{S}+H_{B}]=0$ can be written as 
\begin{equation} 
{U}_{SB}  = \underset{S}{\ketbra{0}{0}} \otimes \underset{B}{\ketbra{0}{0}}
 +\underset{n>0}{\overset{\infty}{\bigoplus}}\; {U}^{(n)}_{SB}, \qquad   \qquad {U}^{(n)}_{SB}  := \sum_{i,j=0,1} 
 U^{(n)}_{ij} \underset{S}{\ketbra{i}{j}} \otimes \underset{B}{\ketbra{n-i}{n-j}}, \label{eq:usbUPgen}
\end{equation}
where for each $n>0$ the coefficients $U^{(n)}_{ij}$ define  $2\times2$ unitary matrices, and $\{ |i\rangle_S\}$ and $\{ |n\rangle_B\}$ are the energy eigenbasis of $S$ and $B$ defined in 
Eq.~(4). Under these conditions Eq.~(\ref{defQ}) becomes
\begin{eqnarray} 
Q_B&=&\frac{1}{2} \Big( V_{00} \;  \underset{B}{\ketbra{0}{0}} 
+   \sum_{n > 0} \sum_{j,i=0,1} U^{(n)}_{ij}  \; V^*_{ij}  \; \underset{B}{\ketbra{n-i}{n-j}}  \Big). \label{defQ1}
\end{eqnarray}

\subsection{Equivalence of worst-case infidelity and Choi infidelity} \label{sec:appbound}

In much of the literature on the implementation of asymmetric unitary gates, the results are stated in terms of the worst-case infidelity defined as 
\begin{eqnarray} \label{eq:WCinfdef}
   \epsilon_{wc}(\mathbf{\Phi}_{\beta}, \mathcal{V})&:=&1-\underset{\ket{\Psi}\in \mathcal{H}_{XS}}{\min}F((\mathcal{I}_{X}\otimes \mathbf{\Phi}_{\beta})(\ketbra{\Psi}), (\mathcal{I}_X \otimes \mathcal{V})  (\ketbra{\Psi}{\Psi})) \\
   \nonumber &=&1 - \underset{\ket{\Psi}\in \mathcal{H}_{XS}}{\min} \braket{\Psi|V_S^\dag \mathbf{\Phi}_\beta ({\ketbra{\Psi}{\Psi}})V_S |\Psi},
\end{eqnarray}
 For finite dimensional systems this quantity 
 is equivalent to the Choi infidelity $\epsilon_{C}(\mathbf{\Phi}_{\beta}, \mathcal{V})$. 
 More specifically we observe that 
\begin{prop} Let $\mathbf{\Phi}$ be an arbitrary CPTP channel defined on
an Hilbert space of dimension $d$ and $\mathcal{V} := V\cdots V^{\dag}$  a unitary mapping on the same space,
 then 
\label{Prop:ChoiInfidelityineq}
    \begin{equation} \label{wcVSChoiFidapp}
\frac{\epsilon_{wc}(\mathbf{\Phi},\mathcal{V})}{d}\leq \epsilon_{C}(\mathbf{\Phi},\mathcal{V})\leq \epsilon_{wc}(\mathbf{\Phi},\mathcal{V}).
\end{equation}
\end{prop}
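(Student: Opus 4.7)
My plan is to reduce both infidelities to the same scalar function on density matrices of $S$ and then exploit its convexity.

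\textbf{Step 1 (common representation).} I introduce $A^{(k)} := V_S^\dagger K_S^{(k)}$, which are Kraus operators of the CPTP composition $\mathcal{V}^\dagger \circ \mathbf{\Phi}$ and hence satisfy $\sum_k A^{(k)\dagger} A^{(k)} = I_S$. For any pure $|\Psi\rangle \in \mathcal{H}_{XS}$ the target state $(\mathcal{I}_X \otimes \mathcal{V})(\ketbra{\Psi}{\Psi})$ is pure, so
\begin{equation}
F\big((\mathcal{I}_X \otimes \mathbf{\Phi})(\ketbra{\Psi}),\, (\mathcal{I}_X \otimes \mathcal{V})(\ketbra{\Psi})\big) = \sum_k |\langle\Psi|I_X\otimes A^{(k)}|\Psi\rangle|^2.
\end{equation}
Using the elementary identity $\langle \Psi | I_X \otimes A | \Psi\rangle = \mathrm{Tr}[\rho_S A]$, where $\rho_S = \mathrm{Tr}_X[\ketbra{\Psi}]$, the right-hand side depends only on $\rho_S$, namely $f(\rho_S) := \sum_k |\mathrm{Tr}[\rho_S A^{(k)}]|^2$. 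Since $\dim X = \dim S = d$, every density matrix on $S$ arises as such a reduction, so
\begin{equation}
1-\epsilon_{wc}(\mathbf{\Phi},\mathcal{V}) = \min_{\rho_S} f(\rho_S), \qquad 1-\epsilon_C(\mathbf{\Phi},\mathcal{V}) = f(I_S/d),
\end{equation}
the second equality because $|\Psi_{\max}\rangle$ has reduced state $I_S/d$.

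\textbf{Step 2 (upper bound).} Since $I_S/d$ is one admissible density matrix, $f(I_S/d)\geq \min_\rho f(\rho)$, which is precisely $\epsilon_C \leq \epsilon_{wc}$.

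\textbf{Step 3 (lower bound).} The key observation is that $f$ is convex on density matrices, as each summand $|\mathrm{Tr}[\rho A^{(k)}]|^2$ is the modulus-squared of a linear functional. Let $|\psi^\ast\rangle$ be a pure minimizer of $f$ (which exists by compactness and convexity), and complete it to an orthonormal basis $\{|\psi_i\rangle\}_{i=1}^d$ of $\mathcal{H}_S$, so that $I_S/d = \tfrac{1}{d}\sum_i \ketbra{\psi_i}$. Convexity of $f$ together with the pure-state bound $f(\ketbra{\psi_i}) = \sum_k |\langle\psi_i|A^{(k)}|\psi_i\rangle|^2 \leq \sum_k \langle\psi_i|A^{(k)\dagger}A^{(k)}|\psi_i\rangle = 1$ (Cauchy-Schwarz plus Kraus completeness) gives
\begin{equation}
f(I_S/d) \leq \frac{1}{d}\sum_{i=1}^d f(\ketbra{\psi_i}) \leq \frac{1}{d}\big(f(\ketbra{\psi^\ast}) + (d-1)\big).
\end{equation}
Rearranging yields $1 - f(\ketbra{\psi^\ast}) \leq d(1 - f(I_S/d))$, which is exactly $\epsilon_{wc} \leq d\,\epsilon_C$.

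The only nontrivial ingredient is the convexity-plus-basis-completion trick in Step 3; the rest is routine. The same line of reasoning hints at saturability up to a factor of 2, since an adversarial channel can concentrate the loss of fidelity on a single basis element, while the remaining $d-1$ basis states can only saturate the bound $f(\ketbra{\psi_i})=1$ in degenerate cases.
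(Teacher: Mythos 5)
Your Steps 1 and 2 are correct and give a clean reformulation: writing $A^{(k)}=V_S^\dagger K_S^{(k)}$ and $f(\rho)=\sum_k|\mathrm{Tr}[\rho A^{(k)}]|^2$, one indeed has $1-\epsilon_{wc}=\min_\rho f(\rho)$ over \emph{all} density matrices, $1-\epsilon_C=f(I_S/d)$, and $f$ is convex with $f\leq 1$. The gap is in Step 3: the claim that a pure minimizer of $f$ exists ``by compactness and convexity'' is backwards. Convexity forces the \emph{maximum} of $f$ onto the extreme points (pure states); its minimum can sit strictly inside the state space, and since $\epsilon_{wc}$ is defined with an entangled probe, the minimizing $\rho_S$ is genuinely allowed to be mixed. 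Concretely, take $\mathcal{V}=\mathcal{I}$ and $\mathbf{\Phi}$ the completely depolarizing channel with Kraus operators $d^{-1/2}\ketbra{i}{j}$: then $f(\rho)=\frac{1}{d}\mathrm{Tr}[\rho^2]$, whose unique minimizer is $I_S/d$, and $\min_{\mathrm{pure}}f=\frac{1}{d}>\frac{1}{d^2}=\min_\rho f$. Your argument therefore only bounds $1-\min_{\mathrm{pure}}f$, which can be strictly smaller than $\epsilon_{wc}$, so as written the lower bound is not established.

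The repair is short and lands you essentially on the paper's proof. Let $\rho^\ast$ be the (possibly mixed) minimizer with largest eigenvalue $\lambda_{\max}$, and set $q_0:=1/\lambda_{\max}\in[1,d]$. Then $\sigma:=\big(I_S/d-\tfrac{q_0}{d}\rho^\ast\big)/\big(1-\tfrac{q_0}{d}\big)$ is a valid state (positive semidefinite precisely by this choice of $q_0$; if $q_0=d$ then $\rho^\ast=I_S/d$ and there is nothing to prove), so $I_S/d=\tfrac{q_0}{d}\rho^\ast+\big(1-\tfrac{q_0}{d}\big)\sigma$, and convexity of $f$ together with $f(\sigma)\leq 1$ (your Cauchy--Schwarz step, which works for mixed states too) gives $f(I_S/d)\leq \tfrac{q_0}{d}f(\rho^\ast)+1-\tfrac{q_0}{d}$, i.e.\ $\epsilon_C\geq \tfrac{q_0}{d}\,\epsilon_{wc}\geq \epsilon_{wc}/d$. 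This weighted decomposition with weight $q_0/d$ on the worst-case state is exactly the paper's construction: there it is implemented physically by a filtering channel built from $\sqrt{q_0\rho^\ast}$ acting on $\ket{\Psi_{\max}}$, with the inequality following from monotonicity of the fidelity rather than from convexity of $f$. Once the pure-minimizer step is replaced by this decomposition, your $f$-based formulation is a valid and arguably more streamlined version of the same argument.
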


\begin{proof}
   The inequality $ \epsilon_{C}(\mathbf{\Phi},\mathcal{V})\leq \epsilon_{wc}(\mathbf{\Phi},\mathcal{V})$ is trivial since the Choi infidelity is obtained by choosing a particular value of $\ket{\Psi}$ - namely, the totally entangled state $\ket{\Psi_{\max}}$  that appears in the definition of the Choi state - in Eq. \eqref{eq:WCinfdef}. To prove the other inequality, consider
 $\ket{\chi}$ the worst-case state in Eq. \eqref{eq:WCinfdef}, i.e., 
 \begin{eqnarray} 
 1 -  \braket{\chi|V_S^\dag \mathbf{\Phi}({\ketbra{\chi}})V_S |\chi}= \epsilon_{wc}(\mathbf{\Phi},\mathcal{V}).
 \end{eqnarray}  By Schmidt decomposition it can be written as 
\begin{equation}
    \ket{\chi} := \sum_{\ell=0}^{d-1} \sqrt{P_{\ell}}\ket{\ell}_{X}\otimes \ket{\ell}_{S}, \end{equation} 
 with $\{ P_{\ell}\}$ probabilities, and with $\{ \ket{\ell}_{X}\}_{\ell}$ and  $\{ \ket{\ell}_{S}\}_{\ell}$
  properly choosing the local bases for $X$ and $S$. 
To continue to proof, we have to add yet another auxiliary space $R$ and introduce channel 
\begin{eqnarray} 
\mathcal{S}(\cdots):= F_{S}^{(0)}(\cdots) F_{S}^{(0) \dagger}\otimes \underset{R}{\ketbra{0}{0}} +F_{S}^{(1)}(\cdots) F_{S}^{(1)\dagger}\otimes \underset{R}{\ketbra{1}{1}}\;, \end{eqnarray}  with $|0\rangle_R$,  $|1\rangle_R$ orthonormal vectors of $R$ and with
\begin{align}
 F_{S}^{(0)}:=\sqrt{q_0 \; {\rho}_S}\;,   \quad \quad
   F_{S}^{(1)}:= \sqrt{{I}_S -F_{S}^{(0)\dag}F_{S}^{(0)}} =\sqrt{{I}_S - q_0 \rho_S} \;,
\end{align}
where 
$q_{0}:=(\underset{j}{\max} P_{j})^{-1}\geq 1$ and $\rho_S: =\sumab{\ell=0}{d-1}{P_{\ell}} \underset{S}{\ketbra{\ell}{\ell}}$ is the reduced density matrix with respect to $S$ of the state $\ket{\chi}$. 
Notice that the channel acts trivially on $X$ and that the chosen value of $q_{0}$ is the largest such that ${I}_S -F_{S}^{(0)\dag}F_{S}^{(0)}\geq 0$, and that 
$1\geq {q_0}/{d}$. 
Identifying the maximally entangled  $\ket{\Psi_{\max}}$ which enters in Eq.~(\ref{defCHOIS}) with 
$\frac{1}{\sqrt{d}}\underset{\ell=0}{\overset{d-1}{\sum}}\ket{\ell}_{X}\otimes \ket{\ell}_{S}$, we observe that
\begin{equation}
 \mathcal{S}({\ketbra{\Psi_{\max}}{\Psi_{\max}}})=\frac{q_{0}}{d}{\ketbra{\chi}{\chi}}\otimes\underset{R}{\ketbra{0}{0}}+(1 -\frac{q_{0}}{d}) {\ketbra{\chi'}{\chi'}}\otimes \underset{R}{\ketbra{1}{1}},
\end{equation}
for some pure state $ \ket{\chi'}$ of $XS$.
We use now the monotonicity of the fidelity under quantum channels and the commutativity of $\mathcal{S}$ with $\mathbf{\Phi}$ and ${\cal V}$ to prove the following chain of relations
\begin{eqnarray}
  F(\sigma_{\mathbf{\Phi}},\sigma_{\mathcal{V}})&\leq& F\left(\mathcal{S}(\sigma_{\mathbf{\Phi}}),\mathcal{S}(\sigma_{\mathcal{V}})\right ) =  F\left(\mathbf{\Phi}\circ \mathcal{S}(\ketbra{\Psi_{\max}})),\mathcal{V}\circ {\cal S}(\ketbra{\Psi_{\max}})\right )\\
  &=&\nonumber F\left(\frac{q_{0}}{d}\mathbf{\Phi}(\ketbra{\chi}) \otimes \underset{R}{\ketbra{0}{0}}+(1 -\frac{q_{0}}{d}) \mathbf{\Phi}({\ketbra{\chi'}{\chi'}})\otimes \underset{R}{\ketbra{1}{1}},\frac{q_{0}}{d}{{\cal V}(\ketbra{\chi})}\otimes \underset{R}{\ketbra{0}{0}}+ (1 -\frac{q_{0}}{d}) {{\cal V}(\ketbra{\chi'})}\otimes\underset{R}{\ketbra{1}{1}})\right )\\
  &=&\nonumber\left( \frac{q_{0}}{d}\sqrt{\braket{\chi|V_S^\dag \mathbf{\Phi}({\ketbra{\chi}})V_S |\chi}}+(1 -\frac{q_{0}}{d})\sqrt{ \braket{\chi'|V_S^\dag \mathbf{\Phi}({\ketbra{\chi'}})V_S |\chi'}}\right)^2
    \\
    &\leq& \nonumber \frac{q_{0}}{d}{\braket{\chi|V_S^\dag \mathbf{\Phi}({\ketbra{\chi}})V_S |\chi}}+(1 -\frac{q_{0}}{d}){ \braket{\chi'|V_S^\dag \mathbf{\Phi}({\ketbra{\chi'}})V_S |\chi'}} \\
&\leq& \nonumber \frac{q_{0}}{d}{\braket{\chi|V_S^\dag \mathbf{\Phi}({\ketbra{\chi}})V_S |\chi}}+(1 -\frac{q_{0}}{d}) 
= 1 -\frac{q_{0}}{d} \left(1- {\braket{\chi|V_S^\dag \mathbf{\Phi}({\ketbra{\chi}})V_S |\chi}}\right) =
 1 -\frac{q_{0}}{d} \epsilon_{wc}(\mathbf{\Phi},\mathcal{V})\;,   \end{eqnarray}
where $\sigma_{\mathbf{\Phi}}$ and $\sigma_{\mathcal{V}}$ are the Choi states of the map $\mathbf{\Phi}$
and ${\cal V}$   and where we invoked the Cauchy-Swartz inequality. 
Hence, we conclude
\begin{align*}
  \epsilon_{C}(\mathbf{\Phi},\mathcal{V})=&1-F(\sigma_{\mathbf{\Phi}},\sigma_{\mathcal{V}}) \geq \frac{q_{0}}{d} \epsilon_{wc}(\mathbf{\Phi},\mathcal{V}) \geq \frac{\epsilon_{wc}(\mathbf{\Phi},\mathcal{V})}{d},
\end{align*}
that is the desired result.
\end{proof}

The bound $\epsilon_{C}(\mathbf{\Phi},\mathcal{V})\leq \epsilon_{wc}(\mathbf{\Phi},\mathcal{V})$ is easily seen to be saturable. For instance consider $\mathcal{V}=\mathcal{I}$ and $\mathbf{\Phi}(\cdots) =\rho_{\star} \mbox{Tr} [ \cdots]$ with  some fixed output state $\rho_{\star}$  (a trash-and-replace channel).
 To illustrate the tightness of the bound $\frac{\epsilon_{wc}(\mathbf{\Phi},\mathcal{V})}{d}\leq \epsilon_{C}(\mathbf{\Phi},\mathcal{V})$ we consider the following example with $\mathbf{\Phi}=\mathcal{I}$ and $\mathcal{V}$ associated with the unitary transformation 
\begin{equation}
 {V}_S= e^{i \varphi}\underset{S}{\ketbra{0}} + e^{-i \varphi}\underset{S}{\ketbra{1}}+ \sum_{i=2}^{d-1} \underset{S}{\ketbra{i}}.
\end{equation}
In this case one finds 
\begin{equation}
    F(\sigma_{\mathbf{\Phi}},\sigma_{\mathbf{\mathcal{V}}}) = \left|\frac{1}{d} (\bra{00}e^{i\varphi}\ket{00}+\bra{11}e^{-i\varphi}\ket{11}+\sum_{i\geq 2}^{d-1} \braket{ii}) \right|^2 = \frac{(d-2 (1-\cos (\varphi )))^2}{d^2},
\end{equation}
and $\epsilon_{C}(\mathbf{\Phi},\mathcal{V}) =1-  F(\sigma_{\mathbf{\Phi}},\sigma_{\mathbf{\mathcal{V}}})= 4 \frac{ 1-\cos (\varphi) }{d} \left(1-\frac{1-\cos (\varphi )}{d}\right)$. In turn computing $\epsilon_{wc}(\mathbf{\Phi},\mathcal{V})$ correspond to the optimal discrimination task between the unitary channels $\mathcal I$ and $\mathcal{V}$. The minimal achievable fidelity $F_{wc}$ for this task is well know~\cite{PhysRevA.66.052107}. It does not require entanglement with the ancillary and is attained by probing the channels with the state $\ket{+}_S = \frac{1}{\sqrt 2}(\ket{0}+\ket{1})$, and gives
\begin{equation}
F_{wc} = \left|\frac{\bra{0}e^{i \varphi} \ket{0} + \bra{1}e^{-i \varphi}\ket{1}}{2}\right|^2 = \cos^2(\varphi)\quad \implies \quad \epsilon_{wc}(\mathbf{\Phi},\mathcal{V})=   \sin^2\left(\varphi\right).
\end{equation}
For the ratio of infidelities we found an example of channels with 
\begin{equation}
\frac{\epsilon_{wc}(\mathbf{\Phi},\mathcal{V})}{\epsilon_{C}(\mathbf{\Phi},\mathcal{V})} = \frac{\sin^2(\varphi)}{4 \frac{ 1-\cos (\varphi) }{d} \left(1-\frac{1-\cos (\varphi )}{d}\right)} =\frac{d^2 (\cos (\phi )+1)}{4 (d+\cos (\phi )-1)} \to \frac{d}{2}
\end{equation}
in the limit $\varphi \to 0$. We have thus shown that the dependence of the bound in the dimension is correct up to at most a constant factor of $\frac{1}{2}$. In other words, for any dimension $d>2$ and any $\epsilon>0$ we have found two channels with $\frac{\epsilon_{wc}(\mathbf{\Phi},\mathcal{V})}{\epsilon_{C}(\mathbf{\Phi},\mathcal{V})} =  (\frac{d}{2}-\epsilon)$, while for $d=2$ we find $\frac{\epsilon_{wc}(\mathbf{\Phi},\mathcal{V})}{\epsilon_{C}(\mathbf{\Phi},\mathcal{V})}  =1$.

\section{Calculation of the Choi infidelity}
\label{sec:calcchoi}

We start by producing a Kraus set  for the CPTP map $\mathbf{\Phi}_{\beta}$ associated with  the global unitary evolution in Eq.~\eqref{eq:usbUP}. For the case where  the battery is initialized in the pure state $|\beta\rangle_B = \underset{n=0}{\overset{\infty}{\sum}} \beta_n |n\rangle_{B}$, with $|n\rangle_B$ being the energy eigenstates of $H_B$, we can write 
 \begin{eqnarray} K_S^{(n)} &=& {_B\langle} n | U_{SB} |\beta\rangle_B = \sum_{m=0}^{\infty} \beta_m\;  {_B\langle}  n | U_{SB} |m\rangle_B\;. 
 \label{defKSn} \end{eqnarray} 
For $n=0$ this gives 
\begin{equation} 
{K}_S^{(0)} = \beta_0 \left( \underset{S}{\ketbra{0}{0}}
   + V_{11}  
   \underset{S}{\ketbra{1}{1}}\right) + V_{10} \beta_1
   \underset{S}{\ketbra{1}{0}}. \label{eq:kraus0}
\end{equation}
For $n \geq  1$ instead we get 
\begin{align} 
K_S^{(n)} & = \sum_{m=0}^{\infty} \beta_m 
\sum_{i,j=0,1} \sum_{l=1}^{\infty} 
\underset{B}{\langle} n\underset{B}{\ketbra{l-i}{l-j}} m \underset{B}{\rangle} \; \underset{S}{\ketbra{i}{j}} V_{ij} 
= \sum_{m=0}^{\infty} \sum_{l=1}^{\infty} \sum_{i,j=0,1} \beta_m \delta_{n,l-i} \delta_{m,l-j}
\underset{S}{\ketbra{i}{j}} V_{ij}  \\ & = \sum_{m=1-j}^{\infty} \sum_{i,j=0,1} \beta_m \delta_{n+i,m+j}
\underset{S}{\ketbra{i}{j}} V_{ij}  \notag
   = \sum_{i,j=0,1} \beta_{n+i-j} 
\underset{S}{\ketbra{i}{j}} V_{ij}  =
  \beta_{n}V_{S}+\sum_{i,j=0,1} (\beta_{n+i-j}-\beta_{n}) 
\underset{S}{\ketbra{i}{j}} V_{ij},
\end{align}
where in the last identity we sum and subtract the  term $\beta_{n}V_{S}$.

To compute Choi infidelity \eqref{eq:CHOIinfdef}, we have to
multiply all Kraus operators found above by $V^{\dag}_S$. We obtain
\begin{align} \notag 
 V^{\dag}_SK_S^{(n)}   & = \beta_{n}  \;   {I}_S+ 
 \sum_{k,i,j} V^{\dagger}_{ik}V_{kj}(\beta_{n+k-j} -\beta_n)\underset{S}{\ketbra{i}{j}}=: \beta_{n} \;  {I}_S+ J_S^{(n)},
\end{align}
where the second to last equality is a definition of the operator $J_S^{(n)}$, which implies 
\begin{equation}
\abs{\mbox{Tr}[V^{\dag}_{S}K_S^{(n)}]}^{2}=|\beta_{n}|^{2}|\mbox{Tr}[ {I}_S]|^{2}+ 4\Re[\mbox{Tr}[\beta_{n}^{*}J_S^{(n)}]]+\abs{\mbox{Tr}[J_S^{(n)}]}^{2}.
\end{equation}
Direct computation returns $\mbox{Tr}[J_S^{(n)}]=\abs{V_{01}}^{2}(\beta_{n+1}+\beta_{n-1}-2\beta_{n})$, plugging in Eq.(2)  and using the normalization condition $\underset{n>0}{\overset{\infty}{\sum}}\abs{\beta_{n}}^{2}=1-\abs{\beta_{0}}^{2}$ we get
\begin{equation} 
\epsilon_C(\mathbf{\Phi}_{\beta},\mathcal{V}) =  |V_{01}|^2 \sum_{n>0} \big[ - \Re[\beta_{n}^{*}(\beta_{n+1}+\beta_{n-1}-2\beta_{n})]  - \frac{|V_{01}|^2}{4}   |\beta_{n+1} + \beta_{n-1} - 2 \beta_n|^2 \big] + \abs{\beta_{0}}^{2}- \frac{1}{4}|\mbox{Tr}[V_{S}^{\dagger}K_{S}^{(0)}]|^2.
\label{eq:finalchoi_app}
\end{equation}
 
To simplify further, we use summation by parts:

\begin{equation}
    \sum_{n=1}^{\infty} f_n [g_{n+1} - g_{n}]
    = - f_1 g_1 - \sum_{n=1}^{\infty} g_{n+1} [f_{n+1} - f_n],
\end{equation}

with $f_n = \beta^{*}_n$ and $g_n = \beta_n - \beta_{n-1}$.
Thus we obtain

\begin{equation}
    \sum_{n=1}^{\infty} \beta^{*}_n [\beta_{n+1} +  \beta_{n-1} - 2\beta_{n}]
    = - \beta^{*}_1 (\beta_1 - \beta_{0}) - \sum_{n=1}^{\infty} [\beta_{n+1} - \beta_{n}] [\beta_{n+1}^* - \beta_n^*].
\end{equation}

If we replace the equation above inside Eq. \eqref{eq:finalchoi_app} we obtain Eq.
(6) in the main text, after defining

\begin{align} \notag
    \Delta(\beta_0,\beta_1, V_S) & = |\beta_0|^2 + |V_{01}|^2 |\beta_1|^2 -   |V_{01}|^2 \Re[\beta_0 \beta_1^*]
    - \frac{1}{4} \big| \mbox{Tr}[V_S^{\dagger} K_S^{(0)}]\big|^2 
    \\
    & =
 |\beta_0|^2 + |V_{01}|^2(|\beta_1|^2 - \Re[\beta_0 \beta_1^*])
    - \frac{1}{4} \big| \beta_0 (V_{00}^{*}+|V_{11}|^2) + \beta_1 |V_{01}|^2   \big|^2.
\end{align}

\section{Comments on the assumptions on the battery input state and
the global interaction}

\subsection{Analysis of the ground state of the battery}
\label{sec:calcchoi2}

The aim of this section is to prove that to obtain an optimal state of the battery it is necessary to have a ground state population that is sufficiently small.
In terms of the function $\psi$ defined in Eq. 
(10) of the main text,
this corresponds to choosing $\psi(0)=0$. We first establish a useful lemma on the Choi infidelity:

\begin{lem} \label{lem:Choi-Inf-Kraus}
    Let $\mathbf{\Phi}_{\beta}$  be a CPTP map acting on a $d$-dimensional space described by the Kraus set $\{ K_{S}^{(k)}\}_k$, and ${\cal V}$ a unitary evolution associated with the gate $V_S$. The Choi infidelity 
     $\epsilon_{C}(\mathbf{\Phi}_\beta,\mathcal{V})$ can be expressed as 
\begin{equation}\label{decomposizione1} 
    \epsilon_{C}(\mathbf{\Phi}_\beta,\mathcal{V})=  \underset{k}{\sum} \epsilon^{(k)}_C(\mathbf{\Phi}_\beta,\mathcal{V}),
\end{equation}
with $\epsilon^{(k)}_C(\mathbf{\Phi}_\beta,\mathcal{V})$ non-negative quantities defined as
  {\rm \begin{equation} \label{defepsilonk} 
      \epsilon^{(k)}_C(\mathbf{\Phi}_\beta,\mathcal{V}):=\frac{1}{d}\left( \mbox{Tr}[K_{S}^{(k)\dagger}K_{S}^{(k)}]-\frac{\abs{\mbox{Tr}[V_S^\dag K_{S}^{(k)}]}^{2}}{d}\right).
    \end{equation}}
\end{lem}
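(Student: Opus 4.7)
The plan is to first verify the decomposition identity in Eq.~\eqref{decomposizione1}, and then establish non-negativity of each summand $\epsilon_C^{(k)}$ separately using Cauchy--Schwarz.

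For the decomposition, I would start from the compact formula \eqref{eq:choinf} for the Choi infidelity, namely $\epsilon_C(\mathbf{\Phi}_\beta,\mathcal{V}) = 1 - \frac{1}{d^2}\sum_k |\mbox{Tr}[V_S^\dag K_S^{(k)}]|^2$, and simply rewrite the constant term $1$ using the CPTP trace-preservation condition $\sum_k K_S^{(k)\dagger}K_S^{(k)}= I_S$. Taking the trace gives $\sum_k \mbox{Tr}[K_S^{(k)\dagger}K_S^{(k)}] = d$, so $1 = \frac{1}{d}\sum_k \mbox{Tr}[K_S^{(k)\dagger}K_S^{(k)}]$. Substituting this into Eq.~\eqref{eq:choinf} and grouping terms with the same index $k$ yields precisely the sum in \eqref{decomposizione1} with the summand of the form \eqref{defepsilonk}.

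The non-negativity of each $\epsilon_C^{(k)}$ is the only nontrivial part, and it follows from the Cauchy--Schwarz inequality with respect to the Hilbert--Schmidt inner product $\langle A,B\rangle_{HS} := \mbox{Tr}[A^\dag B]$. Applied to $A = V_S$ and $B = K_S^{(k)}$, it gives
\begin{equation}
|\mbox{Tr}[V_S^\dag K_S^{(k)}]|^2 \leq \mbox{Tr}[V_S^\dag V_S]\, \mbox{Tr}[K_S^{(k)\dagger}K_S^{(k)}] = d\, \mbox{Tr}[K_S^{(k)\dagger}K_S^{(k)}],
\end{equation}
where I used the unitarity of $V_S$ to evaluate $\mbox{Tr}[V_S^\dag V_S] = d$. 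Dividing by $d^2$ and rearranging, I obtain $\frac{1}{d^2}|\mbox{Tr}[V_S^\dag K_S^{(k)}]|^2 \leq \frac{1}{d}\mbox{Tr}[K_S^{(k)\dagger}K_S^{(k)}]$, which is exactly $\epsilon_C^{(k)} \geq 0$.

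I do not anticipate any real obstacle: the decomposition is a bookkeeping rearrangement enabled by trace preservation, and the positivity reduces to a one-line Cauchy--Schwarz estimate. The only subtlety worth flagging in the write-up is that the bound is saturated precisely when $K_S^{(k)} \propto V_S$, which will be useful later when interpreting $\epsilon_C^{(k)}$ as measuring how much the $k$-th Kraus branch deviates from being proportional to the target gate.
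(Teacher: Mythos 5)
Your proposal is correct and follows essentially the same route as the paper's proof: rewrite the constant $1$ via the trace of the completeness relation $\sum_k K_S^{(k)\dagger}K_S^{(k)}=I_S$ to obtain the termwise decomposition, then apply Hilbert--Schmidt Cauchy--Schwarz with $\mbox{Tr}[V_S V_S^\dagger]=d$ for non-negativity. No gaps; the saturation remark ($K_S^{(k)}\propto V_S$) is a nice addition consistent with how the quantity is used later.
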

\begin{proof}
    Because of the normalization condition $\underset{k}{\sum} K_{S}^{(k)\dagger}K_{S}^{(k)}= I_S$ we have: 
    \begin{align}
        \epsilon_{C}(\mathbf{\Phi}_\beta,\mathcal{V})=1-\frac{1}{d^{2}}\underset{k}{\sum} 
        \abs{\mbox{Tr}[V_S^\dag K_{S}^{(k)}]}^{2} 
        =\frac{1}{d}\underset{k}{\sum}\left(\mbox{Tr}[K_{S}^{(k)\dagger}K_{S}^{(k)}]-\frac{ \abs{\mbox{Tr}[V_S^\dag K_{S}^{(k)}]}^{2} }{d} \right),
    \end{align}
    which proves Eq.~(\ref{decomposizione1}). 
The positivity of   $\epsilon^{(k)}_C(\mathbf{\Phi}_\beta,\mathcal{V})$ follows instead from the Cauchy-Schwarz inequality which for any operator $K_{S}^{(k)}$ allows us to write
    \begin{align}
     &  \abs{\mbox{Tr}[V_S^\dag K_{S}^{(k)}]}^{2} \leq  \mbox{Tr}[ V_S  V_S^\dag] \mbox{Tr}[K_{S}^{(k)\dag}K_{S}^{(k)} ]=d \, \mbox{Tr}[K_{S}^{(k)\dag}K_{S}^{(k)} ].
    \end{align}
\end{proof}

Now we will prove a theorem stating that, for the qubit-battery model defined in Eq.~(4) of the main text, populating the ground state implies a high Choi infidelity.

\begin{thm} \label{Th:BoundaryOcc}
Let $\mathbf{\Phi}_{|\beta\rangle}$ be the CPTP map defined in Eq.~(1) 
of the main text, corresponding to the qubit-battery model in Eq.~(4)
, and associated with a battery input state $|\beta\rangle_B$. Let $\beta_0 := {_B\langle}0|\beta\rangle_B$ denote the population amplitude of the battery ground state. Then, for all
choices of the NEP unitary coupling $U_{SB}$, we can identify  a function $\eta(V_S)$ such that
\begin{align}\label{thesis0} 
    \epsilon_{C}(\mathbf{\Phi}_{|\beta\rangle},\mathcal{V}) \geq \epsilon^{(0)}_C(\mathbf{\Phi}_{|\beta\rangle},\mathcal{V}) \geq \eta(V_{S}) \abs{\beta_{0}}^{2},  \qquad 
 [V_{S},H_{S}]\neq 0 \implies \eta(V_{S})>0.
\end{align}

\end{thm}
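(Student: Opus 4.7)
The plan is to establish the two inequalities in~\eqref{thesis0} separately. The first one, $\epsilon_C(\mathbf{\Phi}_{|\beta\rangle},\mathcal{V})\geq \epsilon_C^{(0)}(\mathbf{\Phi}_{|\beta\rangle},\mathcal{V})$, is immediate from Lemma~\ref{lem:Choi-Inf-Kraus}: since $\epsilon_C$ equals the sum of the non-negative quantities $\epsilon_C^{(k)}$, dropping all terms with $k\geq 1$ can only decrease it.

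For the second inequality, I would first compute $K_S^{(0)} = \langle 0|_B U_{SB} |\beta\rangle_B$ using the general energy-preserving form~\eqref{eq:usbUPgen}. Only the trivial ground-state block and the $n=1$ block of $U_{SB}$ can contribute (since matching $\langle 0|_B$ with $|n-i\rangle_B$ requires $i=n\in\{0,1\}$), giving the lower-triangular operator
\begin{equation*}
K_S^{(0)} = \beta_0\, \underset{S}{\ketbra{0}{0}} + \beta_0 U^{(1)}_{11}\,\underset{S}{\ketbra{1}{1}} + \beta_1 U^{(1)}_{10}\,\underset{S}{\ketbra{1}{0}}.
\end{equation*}
The key structural observation is that the matrix element $\langle 0|_S K_S^{(0)}|1\rangle_S$ vanishes identically for \emph{every} choice of $U_{SB}$: physically, a battery sitting in its ground state cannot donate an energy quantum to the qubit. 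Equivalently, setting $E_{01}:=\underset{S}{\ketbra{0}{1}}$, one has $\mbox{Tr}[E_{01}^{\dagger} K_S^{(0)}]=0$ in the Hilbert-Schmidt sense.

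Exploiting this orthogonality, one may freely subtract any multiple of $E_{01}$ from $V_S$ without altering $\mbox{Tr}[V_S^\dagger K_S^{(0)}]$. Subtracting precisely $V_{01} E_{01}$ kills the $(0,1)$ entry of $V_S$, and Cauchy-Schwarz then yields
\begin{align*}
\bigl|\mbox{Tr}[V_S^\dagger K_S^{(0)}]\bigr|^2 &= \bigl|\mbox{Tr}[(V_S-V_{01}E_{01})^\dagger K_S^{(0)}]\bigr|^2 \\
&\leq \mbox{Tr}[(V_S-V_{01}E_{01})^\dagger (V_S-V_{01}E_{01})]\,\mbox{Tr}[K_S^{(0)\dagger}K_S^{(0)}] \\
&= (d-|V_{01}|^2)\,\mbox{Tr}[K_S^{(0)\dagger}K_S^{(0)}],
\end{align*}
where the last step uses $\mbox{Tr}[V_S^\dagger V_S]=d$ and $\mbox{Tr}[E_{01}^\dagger E_{01}]=1$. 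Substituting into~\eqref{defepsilonk} and bounding $\mbox{Tr}[K_S^{(0)\dagger}K_S^{(0)}]\geq |\beta_0|^2$ (by keeping only the $\underset{S}{\ketbra{0}{0}}$ contribution) gives $\epsilon_C^{(0)} \geq (|V_{01}|^2/d^2)|\beta_0|^2$, so one may take $\eta(V_S)=|V_{01}|^2/4$. For a $2\times 2$ unitary the condition $[V_S,H_S]\neq 0$ is equivalent to $V_S$ being non-diagonal, i.e.\ $V_{01}\neq 0$, which guarantees $\eta(V_S)>0$.

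I do not foresee any serious technical obstacle: once the structural zero of $K_S^{(0)}$ at the $(0,1)$ entry is recognized, the Cauchy-Schwarz argument is standard. The only mild subtlety worth double-checking is that the resulting bound is genuinely uniform over $U_{SB}$, which holds here because $\eta(V_S)$ depends only on $V_{01}$, not on any of the unitary blocks $U^{(n)}$ appearing in~\eqref{eq:usbUPgen}.
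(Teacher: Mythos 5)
Your proof is correct, but it takes a genuinely different route from the paper's. Both arguments use Lemma~\ref{lem:Choi-Inf-Kraus} for the first inequality and both start from the same explicit form of $K_S^{(0)}$, whose only nonzero entries are $\beta_0$, $\beta_0 U^{(1)}_{11}$ and $\beta_1 U^{(1)}_{10}$. From there the paper expands $\epsilon_C^{(0)}$ entry by entry, parametrizes both $V_S$ and $U^{(1)}$ by angles, and runs a case analysis (first $|U^{(1)}_{11}|>0$, then $|U^{(1)}_{11}|=0$ split into $\theta=\pi/4$ and $\theta\neq\pi/4$, the last case requiring the minimization of a parabola in $|\beta_1|$), arriving at branch-dependent constants, one of which is proportional to $|U^{(1)}_{11}|^2$. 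You instead isolate the single structural fact $\langle 0|_SK_S^{(0)}|1\rangle_S=0$, use it to replace $V_S$ by $V_S-V_{01}E_{01}$ inside the trace, and close with one Cauchy--Schwarz step, obtaining $\epsilon_C^{(0)}\geq \tfrac{|V_{01}|^2}{4}|\beta_0|^2$. This is shorter, avoids all case distinctions, and yields a constant $\eta(V_S)=|V_{01}|^2/4$ that is manifestly uniform over the unitary blocks $U^{(n)}$ --- a property the paper needs later (in the argument leading to Eq.~\eqref{Eq:beta0bound}) but which, in its first branch, only holds with a $U$-dependent prefactor. All the individual steps check out: $\mbox{Tr}[(V_S-V_{01}E_{01})^\dagger(V_S-V_{01}E_{01})]=d-|V_{01}|^2$ because the subtraction is an orthogonal projection in the Hilbert--Schmidt inner product, $\mbox{Tr}[K_S^{(0)\dagger}K_S^{(0)}]\geq|\beta_0|^2$ follows from keeping the $(0,0)$ entry, and for a $2\times2$ unitary $[V_S,H_S]\neq 0$ is indeed equivalent to $V_{01}\neq 0$.
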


\begin{proof}
The result is a consequence  of Lemma~\ref{lem:Choi-Inf-Kraus} which thanks to the positivity of the 
$ \epsilon^{(k)}_C(\mathbf{\Phi}_{|\beta\rangle},\mathcal{V})$'s allows us to establish the 
 following lower bound for the Choi infidelity of the model
\begin{equation}\label{decomposizione1ineq} 
    \epsilon_{C}(\mathbf{\Phi}_{|\beta\rangle},\mathcal{V}) \geq  \epsilon^{(0)}_C(\mathbf{\Phi}_{|\beta\rangle},\mathcal{V}),
\end{equation}
obtained by eliminating all the terms on the right-hand-side of Eq.~(\ref{decomposizione1ineq}) but the first. 
Recall that for the qubit-battery model the most general NEP unitary coupling has the form of Eq.~\eqref{eq:usbUPgen}. 
Writing the Kraus operator $K_{S}^{(0)}$ as in Eqs.~\eqref{eq:kraus0} we have 
\begin{eqnarray}
K_{S}^{(0)}  &=& {_B\langle} 0| U_{SB} |\beta\rangle_B =  \beta_0 \left( 
 \underset{S}{\ketbra{0}{0}}+ U_{11}^{(1)}   \underset{S}{\ketbra{1}{1}}\right)
+ \beta_1  \; U_{10}^{(1)} \underset{S}{\ketbra{1}{0}}
= \left[ \begin{array}{ll}
\beta_0 & 0 \\
\beta_1 U_{10}^{(1)}  & \beta_0 U_{11}^{(1)}\end{array} \right], \\ 
 \label{eq:Tilde(K)_0} 
V_{S}^{\dagger} K_{S}^{(0)}&=&
\left[ \begin{array}{ll}
V_{00}^*   & V_{10}^* \\
V_{01}^*  &V_{11}^*\end{array} \right]  \left[ \begin{array}{ll}
\beta_0 & 0 \\
\beta_1 U_{10}^{(1)}  & \beta_0 U_{11}^{(1)}\end{array} \right] 
= \left[ \begin{array}{ll}
\beta_0 V_{00}^* + \beta_1 V_{10}^* U_{10}^{(1)}   & \beta_0 V_{10}^* U_{11}^{(1)}\\
\beta_0 V_{01}^* + \beta_1 V_{11}^* U_{10}^{(1)} &\beta_0 V_{11}^*U_{11}^{(1)}\end{array} \right]  =:
 \left[ \begin{array}{ll}
\tilde{K}^{(0)}_{00}   & \tilde{K}^{(0)}_{01} \\
\tilde{K}^{(0)}_{10}  &\tilde{K}^{(0)}_{11}\end{array} \right] ,  
\end{eqnarray}
where  for later convenience we expressed the operators in matrix form w.r.t. the energy eigenbasis of the qubit Hamiltonian $H_S$. Invoking ~(\ref{defepsilonk}) we then get 
\begin{eqnarray} \label{FFAFA}
 \epsilon^{(0)}_C(\mathbf{\Phi}_{|\beta\rangle},\mathcal{V}) &=& \frac{1}{2}\left( \sum_{i,j} |\tilde{K}^{(0)}_{ij}|^2-\frac{1}{2} 
 |\sum_j\tilde{K}^{(0)}_{jj}|^2\right)\\
 &=& \frac{1}{2}\left( |\tilde{K}^{(0)}_{00}|^2+ |\tilde{K}^{(0)}_{11}|^2+ |\tilde{K}_{01}|^2+|\tilde{K}^{(0)}_{10}|^2-\frac{1}{2} | \tilde{K}^{(0)}_{00} + \tilde{K}^{(0)}_{11}|^2\right) \nonumber \\
  &=& \frac{1}{2}\left(\frac{1}{2} | \tilde{K}^{(0)}_{00}  - \tilde{K}^{(0)}_{11}|^2+ |\tilde{K}^{(0)}_{01}|^2+|\tilde{K}^{(0)}_{10}|^2\right) \nonumber \\
  &=&  \frac{1}{2}\left(\frac{1}{2}\big| \beta_0 (V_{00}^* -  V_{11}^*U_{11}^{(1)}) + \beta_1 V_{10}^* U_{10}^{(1)} \big|^2 
+ 
  |\beta_0 V_{10}^* U_{11}^{(1)}|^2 + |\beta_0 V_{01}^* + \beta_1 V_{11}^* U_{10}^{(1)}|^2\right). \nonumber \end{eqnarray} 
    Our goal is to show that, for all choices of $U^{(1)}$ 
    the above formula set a non trivial lower bound for 
$\epsilon_{C}(\mathbf{\Phi}_{|\beta\rangle},\mathcal{V})$ proportional to $|\beta_0|^2$ for all gates $V_S$ that do not commute with $H_S$.
Observe first that since $V_S$ is a $2\times2$ unitary operator,  up to an irrelevant global phase,
its matrix elements $V_{ij}$ can be parametrized as follows 
\begin{eqnarray} 
\begin{cases} \label{eq:paramV} 
&V_{00} = \cos \theta\;, \qquad  \quad V_{01}= \sin\theta e^{i\gamma}, \\
&V_{10} = \sin \theta e^{i(\gamma+\delta)} \;, \quad  V_{11} =- \cos\theta e^{i\delta}, 
\end{cases} 
\end{eqnarray} 
with some proper choices of $\theta\in [0,\pi/2]$ and $\gamma,\delta\in[0,2\pi[$. 
Furthemore  imposing the NEP condition is equivalent to require  its 
non-diagonal elements to have strictly positive non-diagonal contributions, i.e. 
\begin{eqnarray} 
 [V_{S},H_{S}]\neq 0 \quad \Longleftrightarrow \quad |V_{01}|=|V_{10}|=|\sin\theta| >0 \quad \Longleftrightarrow \quad
 \theta\neq 0, \pi/2. 
\end{eqnarray} 
We then notice that  the inequality~(\ref{thesis0})  can be easily verified  when $|V_{10}^* U_{11}^{(1)}|>0$. Indeed 
dropping two non-negative terms, from~(\ref{FFAFA}) we can write
\begin{eqnarray} \epsilon^{(0)}_C(\mathbf{\Phi}_{|\beta\rangle},\mathcal{V})\geq 
 |\beta_0 V_{10}^* U_{11}^{(1)}|^2/2 =
(\sin^2 \theta  |U_{11}^{(1)}|^2/{2}) |\beta_0|^2  \;,
\end{eqnarray} 
which allows us to identify the function $\eta(V_{S})$ of  Eq.~(\ref{thesis0}) with  $(\sin^2 \theta  |U_{11}^{(1)}|^2/{2}) $. 

Assume next  that $|V_{10}^* U_{11}^{(1)}|=0$ a condition which for NEPG $V_S$'s (i.e. for $\theta\neq 0,\pi/2$) is equivalent to have
$|U_{11}^{(1)}|=0$.
Under this assumption we can rewrite~(\ref{FFAFA}) as:  
\begin{eqnarray} 
\epsilon^{(0)}_C(\mathbf{\Phi}_{|\beta\rangle},\mathcal{V})
  &\geq&  \frac{1}{2}\left(\frac{1}{2}\big| \beta_0 V_{00}^*  + \beta_1 e^{i\alpha}  V_{10}^*  \big|^2 
+ |\beta_0 V_{01}^* + \beta_1e^{i\alpha}  V_{11}^*|^2\right) \nonumber \\
&=&  \frac{1}{2}\Big(\frac{1}{2} (|\beta_0 V_{00}^*|^2 + | \beta_1 V_{10}^*|^2 + 2 \Re[\beta^*_0 \beta_1 e^{i\alpha} V_{00}V_{10}^* ])\nonumber \\
&&+ |\beta_0 V_{01}^*|^2 + | \beta_1 V_{11}^*|^2 + 2 \Re[\beta^*_0 \beta_1 e^{i\alpha} V_{01}V_{11}^* ]\Big)\;,\label{eeee} 
\end{eqnarray} 
where we used the fact that since $U_{ij}^{(1)}$ is a $2\times 2$ unitary matrix,  $|U_{11}^{(1)}|=0$ implies $|U_{10}^{(1)}|=1$, i.e. $U_{10}^{(1)}=e^{i\alpha}$ for some real parameter $\alpha$.  Invoking hence the parametrization~(\ref{eq:paramV}) the above can then be  conveniently casted in the form
\begin{eqnarray} 
\epsilon^{(0)}_C(\mathbf{\Phi}_{|\beta\rangle},\mathcal{V})
 &\geq& \frac{1}{2} \Big( \frac{1}{2} (|\beta_0|^2 \cos^2\theta + | \beta_1|^2 \sin^2\theta+ 2\cos \theta \sin\theta\;  \Re[\beta^*_0 \beta_1 e^{i(\alpha-\gamma-\delta)}])\nonumber \\
&&+ |\beta_0 |^2 \sin^2\theta + | \beta_1|^2 \cos^2\theta  - 2 \cos \theta \sin\theta\; \Re[\beta^*_0 \beta_1 e^{i(\alpha+\gamma-\delta)} ]\Big)\nonumber \\
&=&   \frac{1}{2} \Big( \left( \tfrac{1+  \sin^2\theta}{2} \right)  |\beta_0|^2 + \left( \tfrac{1+  \cos^2\theta}{2}\right)   | \beta_1|^2 
+ \sin(2\theta) \Re\left[\beta^*_0 \beta_1 e^{i(\alpha-\delta)}\left(\frac{e^{-i\gamma}}{2}-e^{i\gamma}\right)\right]
\Big)\label{questaqui}  .
 \end{eqnarray} 
 Observe that if $\sin (2\theta)=0$ (i.e. for $\theta=\pi/4$), Eq.~(\ref{questaqui})  leads to \begin{eqnarray} 
\epsilon^{(0)}_C(\mathbf{\Phi}_{|\beta\rangle},\mathcal{V})
 \geq \frac{3}{8}  ( |\beta_0|^2 +    | \beta_1|^2 )  \geq  \frac{3}{8}  |\beta_0|^2, 
 \end{eqnarray} 
 which proves the thesis setting $\eta(V_S) =3/8$. To deal with the case  $\theta\neq  \pi/4$ instead we observe that the r.h.s. of Eq.~(\ref{questaqui}) can be lower bounded as follows
 \begin{eqnarray} 
\epsilon^{(0)}_C(\mathbf{\Phi}_{|\beta\rangle},\mathcal{V}) &\geq&  \frac{1}{2} \Big(  \left( \tfrac{1+  \sin^2\theta}{2} \right)  |\beta_0|^2 + \left( \tfrac{1+  \cos^2\theta}{2}\right)   | \beta_1|^2 
 -|\sin(2\theta)\beta_0 \beta_1|\left|\frac{e^{-i\gamma}}{2}-e^{i\gamma}\right|\Big) \nonumber \\
  &=&  \frac{1}{2} \Big(  \left( \tfrac{1+  \sin^2\theta}{2} \right)  |\beta_0|^2 + \left( \tfrac{1+  \cos^2\theta}{2}\right)   | \beta_1|^2 
 -|\sin(2\theta)|\sqrt{5/4-\cos(2\gamma)}|\beta_0 \beta_1|\Big). \label{questaqua} 
\end{eqnarray} 
Now  from the normalization condition of $|\beta\rangle_B$, it follows that we can write $|\beta_1|= x \sqrt{1-|\beta_0|^2}$ with $x\in[0,1]$. Accordingly we can rewrite (\ref{questaqua}) as
\begin{eqnarray} 
\epsilon^{(0)}_C(\mathbf{\Phi}_{|\beta\rangle},\mathcal{V}) &\geq&    A   x^2 
 -B x+C\;, \nonumber
\end{eqnarray} 
with \begin{eqnarray}
\begin{cases} &A: =  \left({1+  \cos^2\theta}\right)  (1-|\beta_0|^2)/4, \\
&B:= |\sin(2\theta)|\sqrt{5/4-\cos(2\gamma)}|\beta_0| \sqrt{1-|\beta_0|^2}/2, \\ 
&C:= \left( 1+  \sin^2\theta  \right)  |\beta_0|^2/4. \end{cases}\end{eqnarray} 
As a function of $x$ the above expression is a parabola with reaches the minimum value $C-B^2/(4A)$  for 
 $x_{\min} = B/2A$, so that 
 \begin{eqnarray} 
\epsilon^{(0)}_C(\mathbf{\Phi}_{|\beta\rangle},\mathcal{V})
   &\geq& C-B^2/(4A)
 =\left(  \left( \frac{1+  \sin^2\theta}{4} \right)- \frac{\sin^2(2\theta)|5/4-\cos(2\gamma)| }{4\left( {1+  \cos^2\theta}\right) }
 \right) |\beta_0|^2\nonumber \\ 
 &=&\frac{ (1+  \sin^2\theta)(1+  \cos^2\theta)- \sin^2\theta \cos^2\theta |5-4\cos(2\gamma)| }{4 \left( {1+  \cos^2\theta}\right) }
 |\beta_0|^2\nonumber \\ 
 &\geq &\frac{ (1+  \sin^2\theta)(1+  \cos^2\theta)- 9\sin^2\theta \cos^2\theta  }{4 \left( {1+  \cos^2\theta}\right) }
 |\beta_0|^2\nonumber \\ 
 &= &\frac{ 1  - 4\sin^2\theta \cos^2\theta  }{2({1+  \cos^2\theta}) }=\frac{ (\cos^2\theta -\sin^2\theta)^2  }{2({1+  \cos^2\theta}) }
 |\beta_0|^2,
\end{eqnarray}
which proves the thesis with $\eta(V_S)=\frac{ (\cos^2\theta -\sin^2\theta)^2  }{2({1+  \cos^2\theta}) }$ that is strictly positive function for $\theta\neq \pi/4$. 
\end{proof}
  We remark that this is a property that applies to the population of the ground state given its special role. Indeed, if theorem \ref{Th:BoundaryOcc} was true for every $\beta_{n}$, i.e. $\epsilon^{(n)}_C(\mathbf{\Phi}_{|\beta\rangle},\mathcal{V})\geq \eta(V_{S})\abs{\beta_{n}}^{2}$, we would have $\epsilon_C(\mathbf{\Phi}_{|\beta\rangle},\mathcal{V})=\sumab{n=0}{\infty} \epsilon^{(n)}_C(\mathbf{\Phi}_{|\beta\rangle},\mathcal{V})\geq \eta(V_{S})$, meaning that arbitrary small precision would be impossible. 

The theorem above implies that \lq\lq smooth" states as in Eq.~(10) 
of the main text need to have zero overlap with the ground state in order to reach the optimal asymptotic scaling of the Choi infidelity in $\delta$, as we prove below.
\begin{cor} \label{Cor:GSpopulation}
    Let $\ket{\beta^{(\delta)}}:= C_{\delta}\underset{n=0}{\overset{\infty}{\sum}}\psi(n\delta)\ket{n}_{B} $ as in 
    Eq.~(10), and let $\mathbf{\Phi}_{\ket{\beta^{(\delta)}}}$ be the respective channel as in Eq.~(1) of the main text
    \begin{equation}
        \epsilon_{C}(\mathbf{\Phi}_{|\beta^{(\delta)}\rangle},\mathcal{V})\geq \epsilon^{(0)}_C(\mathbf{\Phi}_{|\beta\rangle},\mathcal{V}) \geq \eta(V_S) \abs{\psi(0)}^{2}\delta+o(\delta).
    \end{equation}
\end{cor}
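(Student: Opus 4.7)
The plan is to apply Theorem~\ref{Th:BoundaryOcc} directly to the parametrized family of states $\ket{\beta^{(\delta)}}$, and then estimate the ground-state amplitude $\beta_0^{(\delta)}$ in the small-$\delta$ regime. Since the first two inequalities in the corollary are essentially restatements of the general bound and of the block decomposition in Lemma~\ref{lem:Choi-Inf-Kraus}, all the real work lies in showing that $|\beta_0^{(\delta)}|^{2}=|\psi(0)|^{2}\,\delta+o(\delta)$.

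More precisely, I would read off from the ansatz~\eqref{eq:ansatz} that $\beta_0^{(\delta)}=C_{\delta}\,\psi(0)$, so it suffices to analyze the normalization constant
\begin{equation}
C_\delta^{-2}=\sum_{n=0}^{\infty}|\psi(n\delta)|^{2}.
\end{equation}
Because $\psi$ has bounded derivatives and is $L^2$-normalized, this sum is a Riemann sum for $\tfrac{1}{\delta}\int_0^\infty |\psi(x)|^2\,\dd x=\tfrac{1}{\delta}$, with error controlled by $\|\psi'\|_\infty$ via the standard midpoint/trapezoid estimate. Hence $C_\delta^{-2}=\tfrac{1}{\delta}+O(1)$, which yields $C_\delta^{2}=\delta+o(\delta)$, and therefore
\begin{equation}
|\beta_0^{(\delta)}|^{2}=|\psi(0)|^{2}\,\delta+o(\delta).
\end{equation}

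With this expansion in hand, the conclusion is immediate: Theorem~\ref{Th:BoundaryOcc} gives $\epsilon_{C}(\mathbf{\Phi}_{|\beta^{(\delta)}\rangle},\mathcal{V})\geq \epsilon^{(0)}_{C}(\mathbf{\Phi}_{|\beta^{(\delta)}\rangle},\mathcal{V})\geq \eta(V_S)\,|\beta_0^{(\delta)}|^{2}$, and substituting the asymptotic form of $|\beta_0^{(\delta)}|^{2}$ produces the stated bound $\eta(V_S)|\psi(0)|^{2}\delta+o(\delta)$. The only subtle point—what I would call the main obstacle—is justifying the Riemann-sum estimate uniformly, i.e. making sure that the correction to $C_\delta^{-2}$ is genuinely $O(1)$ rather than $O(1/\delta)$. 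This follows cleanly from the hypotheses on $\psi$ (boundedness of $\psi'$, square integrability, and the boundary behavior $\psi(x)\to 0$ as $x\to\infty$, which is needed to turn the tail of the sum into the tail of the integral), all of which are already part of the ansatz in~\eqref{eq:ansatz}, so no new assumption is required.
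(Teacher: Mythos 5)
Your proposal is correct and follows essentially the same route as the paper: apply Theorem~\ref{Th:BoundaryOcc} to $\ket{\beta^{(\delta)}}$ with $\beta_0^{(\delta)}=C_\delta\psi(0)$ and use $C_\delta^2=\delta+o(\delta)$. The only difference is that you spell out the Riemann-sum justification of $C_\delta^{-2}=\delta^{-1}+O(1)$, which the paper simply asserts.
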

\begin{proof}
    Thanks to theorem \ref{Th:BoundaryOcc} we have $\epsilon_{C}^{(0)}(\mathbf{\Phi}_{|\beta^{(\delta)}\rangle},\mathcal{V})\geq \eta(V_S) \abs{\psi(0)}^{2}C_{\delta}^{2}+o(\delta)$, then using $C_{\delta}^{2}=\delta +o(\delta)$ we get the thesis.
\end{proof}
To conclude the argument, notice that Eq.~(11) in the main text 
implies that all the smooth battery states with zero overlap with the ground state (equivalently, with $\psi(0)=0$), allows for $
\epsilon_{C}(\mathbf{\Phi}_{|\beta^{(\delta)}\rangle},\mathcal{V}) \propto \delta^{2}
$.

More generally  suppose there exists a unitary interaction 
$U_{SB}$ and a battery state $|\beta'\rangle_B$ (possibly different from 
the $|\beta^{(\delta)}\rangle_B$ reported in the main text) 
that 
 has zero overlap with the ground state $|0\rangle_B$ 
 (i.e. ${_B \langle} 0| \beta'\rangle_B=0$), such that
   \begin{equation}\label{performance} 
    \epsilon_{C}(\mathbf{\Phi}_{|\beta'\rangle},\mathcal{V})\leq C(V_S)\delta^{2},
\end{equation}
for some appropriate function $C(V_S)$. According to Theorem~\ref{Th:BoundaryOcc}, 
any other pure battery state $|\beta\rangle_B$ with
non-zero ground state amplitude $\beta_0$ 
can achieve the same precision level~(\ref{performance}) only if the following inequality holds:
\begin{equation} \label{Eq:beta0bound}
\eta(V_S) |\beta_0|^2 \leq C(V_S)\delta^2   \quad \Longrightarrow \quad |\beta_0|^{2} \leq \delta^{2} \frac{C(V_S)}{\eta(V_S)}.
\end{equation}
Importantly, this conclusion remains valid even if a different $U_{SB}$ -- distinct from the one used to establish inequality (\ref{performance}) -- is employed in the model. 

We conclude by noticing that the choice $\beta_0 = 0$ allows for a simplification in Eq. (6). Indeed, we 
have
\begin{equation}
    \Delta(0,\beta_1,V_S) = |V_{01}|^2\left( 1 - \frac{|V_{01}|^2}{4}\right) |\beta_1|^2.
\end{equation}
For the smooth family defined above (see Eq.~(10) 
in the main text) we have
$\beta_1 = \beta_0 + \delta C_{\delta} \psi'(0) + O(\delta^2) $
 so we get 
\begin{equation}
\Delta(0,\beta_{1},V_{S}) \sim \abs{\beta_{1}}^{2}=C_{\delta}^{2}\delta^{2}\abs{\psi'(0)}^{2}+O(\delta^{4})=
O(\delta^{3}),
\end{equation}
where we used that $C_{\delta}^{2}=\delta+o(\delta)$.
Given this, the contribution of $\Delta(0, \beta_1, V_S)$ can be safely neglected.

\subsection{On the optimality of the unitary interaction between S and B}\label{subsec:ChosenInteraction}
The aim of this subsection is to justify the system-battery interaction chosen in the main text, 
\begin{align} \label{eq:usbUPapp}
& U_{SB}  = \underset{S}{\ketbra{0}{0}} \otimes \underset{B}{\ketbra{0}{0}}
 +\underset{n>0}{\overset{\infty}{\bigoplus}}\; {U}^{(n)}_{SB}, \quad  {U}^{(n)}_{SB} := \sum_{i,j=0,1} \underset{B}{\ketbra{n-i}{n-j}} \otimes \underset{S}{\ketbra{i}{j}}V_{ij}. 
\end{align}
 We first give an intuitive argument and then proceed with a rigorous proof of its optimality under an additional assumption.
Recall that in the case of the qubit-battery model, 
the most general unitary interaction $U_{SB}$ that is energy preserving, i.e. $[U_{SB},H_{S}+H_{B}]=0$, 
is the one in Eq.\eqref{eq:usbUPgen}.
Assuming as input state of the battery one of the 
states~$|\beta^{(\delta)}\rangle_B$ of the ansatz 
in Eq.~(10),
the  Kraus operators of the associated map 
 $ \mathbf{\Phi_{|\beta^{(\delta)}\rangle}}$ are (for $n>0$) 
\begin{equation}
    K_{S}^{(n,\delta)}= {_B \langle} n|  U_{SB} |\beta^{(\delta)}\rangle_B= \underset{i,j=0,1}{\sum} \beta^{(\delta)}_{n+i-j} U^{(n+i)}_{ij} 
\underset{S}{\ketbra{i}{j}},
\end{equation}
from which, after some manipulations, we obtain
\begin{align} \label{eq:bla}
    K_{S}^{(n,\delta)} &= \beta_{n}^{(\delta)}U_S^{(n)}+\underset{i,j=0,1}{\sum} ( \beta_{n+i-j}^{(\delta)} U^{(n)}_{ij}-\beta_{n}^{(\delta)} U^{(n+i)}_{ij}) 
\underset{S}{\ketbra{i}{j}} \\ &= 
\beta_{n}^{(\delta)}U_S^{(n)}+\underset{i,j=0,1}{\sum} \big[ (\beta_{n+i-j}^{(\delta)} -\beta_n^{(\delta)})  U^{(n)}_{ij}-\beta_{n}^{(\delta)} (U_{ij}^{(n+i)} - U^{(n)}_{ij}) \big]
\underset{S}{\ketbra{i}{j}}.
\end{align}
 Unitary channels are extreme points of the convex set of CPTP maps, thus any Kraus representation of a unitary channel is made of operators proportional to the unitary itself, so we want to choose $U^{(n)}$ such that the convergence
$K_{S}^{(n,\delta)}\overset{\delta \rightarrow 0}{\propto} V_{S}$, is as quick as possible. Considering a smooth battery state as in the main text,
the first term inside the sum in
Eq. \eqref{eq:bla} becomes of order $\delta$ since $\frac{\beta_{n+i-j}^{(\delta)}-\beta_{n}^{(\delta)}}{\beta_{n}^{(\delta)}}= O(\delta)$.
To make the rest of the terms proportional to $V_S$, it is sufficient to choose all the $U^{(n)}$ to be the same and equal to $V_S$.
In this way, the term $\beta_n U^{(n)}_S$ is exactly proportional to $V_S$ (the condition we were looking for) while the last term in the sum nullifies.

We now come to the rigorous statement. Suppose that $\abs{U_{ij}^{(n)}}=\abs{U_{ij}}\;\; \forall \; n$, then the interaction can always be parametrized as 
\begin{eqnarray} \label{eq:paramU} 
\begin{cases} 
&U_{00}^{(n)} = \cos \bar{\theta} e^{i\alpha_{n}}\;, \qquad \quad \quad U_{01}^{(n)}= \sin \bar{\theta} e^{i(\alpha_n+\gamma_n)}\;, \\
&U_{10}^{(n)} = \sin \bar{\theta} e^{i(\alpha_{n}+ \gamma_{n}+\delta_{n})} \;, \quad  U_{11}^{(n)} =- \cos\bar{\theta} e^{i(\alpha_{n}+\delta_{n})}\;,
\end{cases} 
\end{eqnarray}
for some arbitrary phases $\alpha_{n},\delta_{n},\gamma_{n}$, and a fixed, uniform over $n$, angle $\bar{\theta}$. Then we have the following bound on the Choi-infidelity
\begin{prop}
Suppose that $\beta_{0}=0$ and consider the parametrizations of Eq.\eqref{eq:paramV} for $V_{S}$ and Eq.\eqref{eq:paramU} for $U_{SB}$, then 
\begin{equation} \label{dfdsf} 
    \epsilon_{C}(\mathbf{\Phi}_{\beta},\mathcal{V}_S) \geq 
 \sin^2(\bar{\theta}-{\theta}) +\frac{1}{4}\Big[ \sin (2 \bar{\theta}) \sin (2\theta )  Q_{\beta}^{(1)}  
  +   \sin^2 \bar{\theta} \sin^2\theta \; Q_{\beta}^{(2)} \Big] \;,
\end{equation}
where $Q_{\beta}^{(1)} :=2- 2\sumab{n=1}{\infty}\abs{\beta_{n}\beta_{n+1}} 
  \;$, and $Q_{\beta}^{(2)} := 2\left(1-   \sumab{n=1}{\infty} 
  |\beta_n \beta_{n+2} |\right)$
  . 
\end{prop}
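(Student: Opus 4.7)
The plan is to bound $\sum_n|\mbox{Tr}[V_S^\dag K_S^{(n)}]|^2$ from above, since Eq.~(\ref{eq:choinf}) specialised to $d=2$ reads $\epsilon_C(\mathbf{\Phi}_\beta,\mathcal{V}_S) = 1 - \frac{1}{4}\sum_n|\mbox{Tr}[V_S^\dag K_S^{(n)}]|^2$, so the claimed lower bound on $\epsilon_C$ is equivalent to the matching upper bound on this trace sum. The hypothesis $|U^{(n)}_{ij}|=|U_{ij}|$ is what makes an $n$-independent bound accessible, and $\beta_0=0$ is what removes the boundary artifacts.

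First I would use the Kraus operators derived in this subsection: for $n\geq 1$ one has $(K_S^{(n)})_{ij} = \beta_{n+i-j}\,U^{(n+i)}_{ij}$, while $\beta_0=0$ collapses the $n=0$ operator to $K_S^{(0)} = \beta_1 U_{10}^{(1)}|1\rangle_S\langle 0|_S$ and gives the \emph{exact} value $|\mbox{Tr}[V_S^\dag K_S^{(0)}]|^2 = s^2|\beta_1|^2$, where $c:=\cos\theta\cos\bar\theta$ and $s:=\sin\theta\sin\bar\theta$. For $n\geq 1$, writing $\mbox{Tr}[V_S^\dag K_S^{(n)}] = \sum_{ij}V^*_{ij}U^{(n+i)}_{ij}\beta_{n+i-j}$ and applying the triangle inequality to the four terms (which discards the phases $\alpha_n,\gamma_n,\delta_n$) yields the $n$-independent bound
\begin{equation}
|\mbox{Tr}[V_S^\dag K_S^{(n)}]|^2 \leq \bigl(2c|\beta_n| + s|\beta_{n-1}| + s|\beta_{n+1}|\bigr)^2.
\end{equation}

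Next I would expand the square and sum over $n\geq 0$, using four elementary identities that all rely on $\beta_0=0$: normalisation $\sum_{n\geq 1}|\beta_n|^2=1$; the cross-sum $\sum_{n\geq 1}|\beta_n|(|\beta_{n-1}|+|\beta_{n+1}|) = 2P^{(1)}$ with $P^{(1)}:=\sum_{n\geq 1}|\beta_n||\beta_{n+1}|$; the shifted squares $\sum_{n\geq 1}(|\beta_{n-1}|^2+|\beta_{n+1}|^2) = 2-|\beta_1|^2$; and the next-neighbour sum $\sum_{n\geq 1}|\beta_{n-1}||\beta_{n+1}| = P^{(2)}$ with $P^{(2)}:=\sum_{n\geq 1}|\beta_n||\beta_{n+2}|$. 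The defect $-s^2|\beta_1|^2$ produced by the third identity is cancelled exactly by the $n=0$ contribution, so the total collapses to
\begin{equation}
\sum_{n\geq 0}|\mbox{Tr}[V_S^\dag K_S^{(n)}]|^2 \leq 4c^2 + 2s^2 + 8cs\,P^{(1)} + 2s^2 P^{(2)}.
\end{equation}

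The final step is algebraic: substituting $P^{(k)}=1-Q^{(k)}_\beta/2$ turns the bound into $4(c+s)^2 - 4cs\,Q^{(1)}_\beta - s^2 Q^{(2)}_\beta$, after which dividing by $4$, subtracting from $1$, and applying the identities $1-(c+s)^2=\sin^2(\bar\theta-\theta)$, $4cs = \sin(2\bar\theta)\sin(2\theta)$ and $s^2=\sin^2\bar\theta\sin^2\theta$ delivers the claimed inequality verbatim. The main subtlety is the bookkeeping around the battery boundary: the $-s^2|\beta_1|^2$ from the shifted-square sum and the $+s^2|\beta_1|^2$ from $K_S^{(0)}$ cancel only because of $\beta_0=0$, and this is what keeps the bound free of spurious boundary corrections. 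The triangle-inequality step is where the bound becomes strict rather than tight, which is the price paid for an estimate valid uniformly over all phase choices $\alpha_n,\gamma_n,\delta_n$.
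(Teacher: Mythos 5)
Your proof is correct and follows essentially the same route as the paper: since $\sum_n\abs{\mbox{Tr}[V_S^\dag K_S^{(n)}]}^2/d^2 = \langle\beta|Q^\dag Q|\beta\rangle$ with $Q=\frac{1}{d}\mbox{Tr}_S[V_S^\dag U_{SB}]$, your term-by-term triangle inequality on the Kraus traces is the same estimate the paper performs on the diagonal, nearest-neighbour and next-nearest-neighbour matrix elements of $Q^\dag Q$. Your bookkeeping of the $n=0$ boundary term and of the coefficient multiplying $\sum_n\abs{\beta_n}^2$ (namely $\cos^2\theta\cos^2\bar{\theta}+\tfrac{1}{2}\sin^2\theta\sin^2\bar{\theta}$) is in fact slightly tidier than the paper's stated intermediate bounds, but the final inequality is identical.
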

Before presenting the proof, we underline the consequences of Eq. \eqref{dfdsf}. Since the RHS of \eqref{dfdsf} is lower bounded by $\sin^{2}(\bar{\theta}-\theta)$ if one wants to fix the interaction irrespectively of the battery state, the choice $\bar{\theta}=\theta$, corresponding to the interaction assumed in the main text in Eq.(5), is the only one allowing for arbitrary small Choi-infidelity. Furthermore, choosing the optimal angle as a function of the battery state only produces a negligible enhancement with respect to setting $\bar{\theta}=\theta$ regardless of $\beta$. More formally let $G(\theta,\bar{\theta},Q^{(1)}_{\beta},Q^{(2)}_{\beta})$ be the RHS of Eq.\eqref{dfdsf}, and $G_{min}(Q^{(1)}_{\beta},Q^{(2)}_{\beta}):= \underset{\bar{\theta}}{\min} \; G(\theta,\bar{\theta},Q^{(1)}_{\beta},Q^{(2)}_{\beta})$ then using that $\partial_{\bar{\theta}}G(\theta,\bar{\theta},Q^{(1)}_{\beta},Q^{(2)}_{\beta})_{|\bar{\theta}=\theta}=O(G(\theta,\theta,Q^{(1)}_{\beta},Q^{(2)}_{\beta})) $ we have 
\begin{equation}
G(\theta,\theta,Q^{(1)}_{\beta},Q^{(2)}_{\beta})= G_{min}(Q^{(1)}_{\beta},Q^{(2)}_{\beta}) +O(G(\theta,\theta,Q^{(1)}_{\beta},Q^{(2)}_{\beta})^{2}).
\end{equation}  Thus, in conclusion, the choice $\bar{\theta}=\theta$ is asymptotically optimal, leading to the the following lower bound to the Choi infidelity: 
\begin{align}
  &  \epsilon_{C}(\mathbf{\Phi}_{\beta},\mathcal{V}_S) \geq \frac{1}{4}\left[\sin^2 (2\theta ) \; 
Q_{\beta}^{(1)}    +\sin^4\theta \;  Q_{\beta}^{(2)}\right]+O(\epsilon_{C}^{3/2})= \sin^2\theta \;  Q_{\beta}^{(1)}   
- \sin^4 \theta \left( Q_{\beta}^{(1)}  - \frac{Q_{\beta}^{(2)}}{4}\right)+O(\epsilon_{C}^{3/2}) .
\end{align}
For real coefficients $\beta_{n}$ we have, using summation by parts, $Q_{\beta}^{(1)}=2\sumab{n=0}{\infty}\abs{\beta_{n+1}-\beta_{n}}$ which implies  
\begin{align}
&=|V_{01}|^2 \sum_{n=1}^{\infty} \big[  |\beta_{n+1}  - \beta_n|^2  - \frac{|V_{01}|^2}{4}   |\beta_{n+1} + \beta_{n-1} - 2 \beta_n|^2 \big] + \left(\abs{V_{01}}^{2}-\frac{\abs{V_{01}}^{4}}{4}\right)\abs{\beta_{1}}^{2}+O(\epsilon_{C}^{3/2})
\end{align}
recovering the expression we have in Eq.(6), and proving the optimality of the interaction \eqref{eq:usbUP}, under the assumption of an angle $\bar{\theta}$ uniform over $n$.
\begin{proof}
We start with the observation that we can, without loss of generality, always choose each $\beta_{n}$ coefficient to be real and positive. Indeed, their phases can be written as the application of an energy preserving gate on $B$ and as such they can be reabsorbed into the interaction $U_{SB}$. 
Then we rewrite the Choi-infidelity in a compact form, letting
\begin{eqnarray} 
\kappa_n(\beta)&:=& \frac{\mbox{Tr} [ V^\dag_S \langle n| U_{SB} |\beta\rangle] }{d}, \end{eqnarray} 
the corresponding  Choi error is then given by
\begin{eqnarray} 
 \epsilon_{C}(\mathbf{\Phi}_{\beta},\mathcal{V}_S)&=&1-\sumab{n=0}{\infty}  \left| \kappa_n(\beta)\right|^{2}= 
1 -\frac{1}{d^2}\langle \beta  |\left( \sum_{j,j'}  \langle j| U^\dag_{SB}V_S |j\rangle\langle j'|  V^\dag_S  U_{SB} |j'\rangle\right) |\beta\rangle \nonumber \\
&=& 1- \langle \beta | M |\beta\rangle, \nonumber 
 \end{eqnarray} 
 with
 \begin{eqnarray} 
 M &:=& \frac{1}{d^2} \sum_{j,j'}  \langle j| U^\dag_{SB}V_S |j\rangle\langle j'|  V^\dag_S U_{SB} |j'\rangle
 = Q^\dag Q \;, \nonumber \\
Q &:=&\frac{1}{d} \mbox{Tr}_S [ V^\dag_S U_{SB}] \nonumber. \label{defdiqu} 
 \end{eqnarray} 
 For $n>0$ we have
 \begin{equation} 
 Q|n\rangle = \frac{1}{2} \left[  U_{10}^{(n)} V_{10}^* |n-1\rangle+ 
 \label{qn}
 (U_{00}^{(n)}  V_{00}^* + U_{11}^{(n+1)} V_{11}^* ) | n\rangle + U_{01}^{(n+1)} V_{01}^* |n+1\rangle\right]\;.
 \end{equation}
 Therefore, invoking the parametrization \eqref{eq:paramU} we can write 
 \begin{eqnarray} 
  \langle n| Q^\dag Q  |n\rangle &=&  \frac{1}{4} \left[  \nonumber 
 |U_{00}^{(n)} - e^{-i\delta} U_{11}^{(n+1)}|^2  \cos^2\theta  + (|U_{10}^{(n)} |^2+ |U_{01}^{(n+1)}|^2)  \sin^2\theta \right]\;, \\
  \langle n| Q^\dag Q  |n+1\rangle &=&  \frac{\cos\theta \sin\theta}{4} \Big[  \nonumber 
  (U_{00}^{(n)}  -   e^{-i\delta} U_{11}^{(n+1)}  )^* U_{10}^{(n+1)}  e^{-i(\gamma+\delta)} \nonumber \\
  &&+U_{01}^{(n+1)*} e^{i\gamma}  (U_{00}^{(n+1)}   -  e^{-i\delta}U_{11}^{(n+2)} )
  \Big]\;, \\
   \langle n| Q^\dag Q  |n-1\rangle &=&\left( \langle n-1| Q^\dag Q  |n\rangle \right)^*\;, \\
   \langle n| Q^\dag Q  |n+2\rangle &=&  \frac{ \sin^2\theta}{4}
   U_{01}^{(n+1)*}U_{10}^{(n+2)} e^{-i\delta} \;, \\
     \langle n| Q^\dag Q  |n-2\rangle &=& \left( \langle n-2| Q^\dag Q  |n\rangle \right)^*,
 \end{eqnarray} 
 while all the other products nullify. Now observe that 
 \begin{align} 
 &\langle \beta|M |\beta\rangle =  \sum_{n,n'} \beta_{n} \beta_{n'} \langle n | Q^\dag Q  |n'\rangle \nonumber \\
&  \leq \sum_{n=0}^{\infty}  \big(  |\beta_n|^2 \langle n | Q^\dag Q  |n\rangle 
  + 2 \beta_n \beta_{n+1}  \abs{\langle n | Q^{\dag} Q  |n+1\rangle}
  +  2 \beta_n \beta_{n+2} \abs{ \langle n | Q^{\dag} Q  |n+2\rangle} \big)  \;. 
\end{align} 
In order to bound the last contribution, we can employ the parametrization
\eqref{eq:paramU} and obtain
 \begin{eqnarray} 
 &&\beta_n \beta_{n+2}  \abs{\langle n | Q^\dag Q  |n+2\rangle}= \frac{ \sin^2\theta  }{4} \beta_n \beta_{n+2}  \sin^{2} \bar{\theta}  \;.  \nonumber 
 \end{eqnarray} 
  This allows us to establish the following upper bound 
  \begin{eqnarray} 
 \langle \beta|M |\beta\rangle &\leq& \sum_{n=0}^{\infty}    |\beta_n|^2 \abs{\langle n | Q^\dag Q  |n\rangle} 
  + 2  |\beta_n \beta_{n+1} | \left| \langle n | Q^\dag Q  |n+1\rangle\right|  
 +  2 \frac{ \sin^2\theta  }{4}\ \beta_n \beta_{n+2} \sin^{2} \bar{\theta}   \;.
 \end{eqnarray} 
Observe next that 
 \begin{align}
& |\langle n| Q^\dag Q  |n+1\rangle| \leq   
   \cos\theta \sin\theta  \cos \bar{\theta}\sin\bar{\theta}  \;, \\
& |\langle n| Q^\dag Q  |n\rangle|\leq \cos^{2}\bar{\theta} \cos^2\theta  +\sin^2 \bar{\theta}\sin^2\theta .
 \end{align} 
  Accordingly we can claim that 
 \begin{equation}
\langle \beta|M |\beta\rangle \leq \cos^2(\bar{\theta}-{\theta})   -\frac{1}{4}\left[\sin (2 \bar{\theta}) \sin (2\theta )   \left(1-   \sumab{n=1}{\infty} 
 \beta_n \beta_{n+1} \right) + \sin^2 \bar{\theta} \sin^2\theta \left(1-\sum_{n=1}^{\infty}  \beta_n \beta_{n+2}\right)\right].
 \end{equation} 
We get the thesis by recalling that $\epsilon_{C}(\mathbf{\Phi}_{\beta},\mathcal{V}_S)=1- \langle \beta|M |\beta\rangle $.
  \end{proof}

\section{Minimizing the Choi infidelity using $N$ levels via  discrete sine transform}
\label{sec:fourier}
In this section we solve explicitly the optimization of the Choi infidelity for a battery with a fixed number of levels $N$. Since we know that the solution should have vanishing boundary conditions (in light of App. \ref{sec:calcchoi2} and the requirement that the levels above the $N$-th are not populated), we choose the family $\tilde{\beta}_k$ such that

\begin{equation} \label{eq:defFou}
    \beta_n = \sqrt{\frac{2}{N}}  \sum_{k=1}^{N-1} \sin\left( \frac{\pi k n}{N} \right) \tilde{\beta}_k,
\end{equation}

that is, a discrete sine transform of the original coefficients.
Since we have 

\begin{equation}    \sum_{n=1}^{N-1} \sin \left(\frac{\pi k n}{N}\right) \sin \left(\frac{\pi k'n}{N} \right) =  \frac{N}{2} \delta_{k,k'}
\end{equation}

the change of basis defined above is orthogonal and preserves normalization:

\begin{equation}
    \sum_{n=1}^{N-1} |\beta_n|^2 = \sum_{k=1}^{N-1} |\tilde{\beta}_k|^2 = 1.
\end{equation}

We notice that the first term of the Choi infidelity is nothing but the discrete Laplacian with Dirichlet boundary conditions \cite{DiscreateGreenFunctions}.
The discrete sine transform diagonalizes the discrete Laplacian, so that we have

\begin{equation}
    \sum_{n=0}^{N-1}|\beta_{n+1} - \beta_{n}|^2
    = \sum_{k=1}^{N-1}   4 \sin^2\left( \frac{\pi k}{2N} \right) | \tilde{\beta}_k |^2.
\end{equation}

In the same way, the discrete second derivative can be expressed as

\begin{equation}
   \sum_{n=0}^{N-1}|\beta_{n+1} + \beta_{n-1} - 2\beta_{n}|^2
   = \sum_{k=1}^{N-1} 16 \sin^4 \left( \frac{\pi k}{2N} \right) | \tilde{\beta}_k |^2.
\end{equation}

We conclude that 
\begin{equation}
\epsilon_C = 4 |V_{01}|^2  \sum_{k=0}^{N-1} \sin \left( \frac{\pi k}{2N} \right)^2 \big[1 -  |V_{01}|^2\sin \left( \frac{\pi k}{2N} \right)^2 \big] 
|\tilde{\beta}_k|^2 .   \label{eq:choifour}
\end{equation}
The latter is of the form $\epsilon_c=\abs{V_{01}}^{2}\sumab{k=1}{N-1}f(k,V_{s})\abs{\tilde{\beta}_{k}}^2$ where $f(k,V_{S}):=4\sin \big( \frac{\pi k}{2N} \big)^2 \big[1 - |V_{01}|^2 \sin \big( \frac{\pi k}{2N} \big)^2 \big]$ and thus its minimization is equivalent to the minimization of $f(k,V_{S})$. \\
For any $V_{S}$ the absolute minimum is reached in $k=1$ \footnote{If $\abs{V_{01}}^{2}=1$ there is a second equivalent minimum in $k=N-1$, however the corresponding state is harder to prepare and suboptimal for any other $V_{S}$, hence holds no advantages over the $k=1$ solution}, where we get $f(1,V_{S})=4 \sin(\frac{\pi}{2N})^{2}\big[1 - |V_{01}|^2 \sin \big( \frac{\pi}{2N} \big)^2 \big]=\frac{\pi^{2}}{N^{2}} -O(N^{-4})$, where in the second equality we considered the $N\rightarrow \infty$ limit. In this way we proved that the optimal state in the sine-shaped one in Eq.\eqref{Eq:OptNstate} and the inequality \begin{equation}
    N \geq \frac{ \pi |V_{01}|}{\sqrt{\epsilon_C}}+o(1).
\end{equation}

\section{Explicit optimization of the precision at fixed battery resources}
\label{sec:resoptim}
In this section, we explicitly solve the Lagrangian optimization problems presented in the main text for the case of energy, squared energy and Quantum Fisher Information. 

\subsection{Minimizing the Choi infidelity at fixed $\langle E^{2}\rangle$}
We solve explicitly the optimizaion problem when the resource considered is the average squared energy, $\mathcal{R}(\beta):=\mbox{Tr}[H_{B}^{2}\beta]$. 
Thus we start from 
\begin{align} \label{eq:ground1Dapp}
\mathcal{S}_{\mathcal{R}}\hspace{-1mm}=\hspace{-1mm} \int_0^{\infty} \hspace{-2mm}\dd x\, \psi(x) \left[  -  \delta^{2} \frac{\dd^2}{\dd x^2}  +\lambda \left(\frac{x \omega}{\delta} \right)^{m_{\mathcal{R}}} \right] \psi^*(x),
\end{align}
with $m_{\mathcal{R}}=2$, i.e.  
\begin{align} \label{eq:groundE^{2}app}
\mathcal{S}_{\langle E^2 \rangle}[\psi(x)] = \delta^{2} \int_0^{\infty} \dd x\, \psi(x) \big( x^{2}a  - \frac{\dd^2}{\dd x^2}  \big) \psi^*(x),
\end{align}
where we defined $a:=\frac{\lambda \omega^{2}}{\delta^{4}}$. The variational problem defined by the functional in \eqref{eq:groundE^{2}app} is equivalent to finding the wave function of the ground state a quantum Harmonic oscillator constrained on $[0,\infty)$, and vanishing boundary condition $\psi(0)=0$. Such wave function is thus proportional to the lowest-energy odd wavefunction of a standard Harmonic oscillator of same frequency, i.e. 
\begin{equation}
    \psi_a(x)=\sqrt{2}a^{1/8}\psi_{1}(a^{1/4}x) \quad x\geq 0,
\end{equation}
where $\psi_{1}(x)$ is the normalized first Hermite function.  
We have now to compute $\langle E^{2}\rangle(a):=\omega^{2}\int_{0}^{\infty} \abs{\psi_{a}(x)}^{2}x^{2} \dd x\,$ and ${\rm UD}(a):=\int_{0}^{\infty}\abs{\psi'_{a}(x)}^{2} \dd x$. Both computations can be achieved through the virial theorem for the Harmonic oscillator, thaks to which we find  
\begin{align}
     \langle E^{2}\rangle(a)= \frac{3\omega^{2}}{2\sqrt{a}},
    \qquad {\rm UD}(\psi_{a})= \frac{3}{2}\sqrt{a}.
\end{align}
Inverting the first equation and using the shorthand notation ${\rm UD}_{min}(\langle E^{2}\rangle):={\rm UD}(\psi_{\langle E^{2}\rangle})$ we conclude
\begin{align}
    & \psi_{\langle E^{2}\rangle}= \sqrt{2}\left(\frac{3\omega^{2}}{2\langle E^{2}\rangle}\right)^{1/4} \psi_{1}\left(\sqrt{\frac{3\omega^{2}}{2\langle E^{2}\rangle}} x\right) , \\
    & {\rm UD}_{min}(\langle E^{2}\rangle)= \frac{9}{4}\frac{\omega^{2}}{\langle E^{2}\rangle}.
\end{align}
  
For comparison, for the sine-shaped state minimizing the number of levels
\begin{equation} \label{Eq:OptNstateApp}
\ket{\beta^{(N)}}_B:=C_{N}\underset{n=1}{\overset{N-1}{\sum}}\sin(\frac{\pi n}{N} )\ket{n}_B, 
\end{equation} 
we have 
\begin{equation}
{\rm UD}(\langle E^{2}\rangle)=\left(\frac{\pi^{2}}{3}-\frac{1}{2}\right)\frac{\omega^{2}}{\langle E^{2}\rangle}\approx 2.80 \frac{\omega^{2}}{\langle E^{2}\rangle}.
\end{equation}

\subsection{Minimizing the Choi infidelity at fixed mean energy}
We solve explicitly the optimization problem when the resource considered is the average energy, $\mathcal{R}(\beta):=\mbox{Tr}[H_{B}\beta]$. 
Thus we start from Eq.\eqref{eq:ground1D} with $m_{\mathcal{R}}=1$: 
\begin{align} \label{eq:groundE^{2}}
\mathcal{S}_{\langle E \rangle}[\psi(x)] = \delta^{2} \int_0^{\infty} \dd x\, \psi(x) \left( xa  - \frac{\dd^2}{\dd x^2}  \right) \psi^*(x),
\end{align}
where this time $a:=\frac{\lambda \omega}{\delta^{3}}$.
We have to solve the Schrödinger equation:
\begin{equation} \label{eq:dedimsc}
    -\psi''(x)+ax\psi=E\psi.
\end{equation}
for the lowest possible $E$ and with boundary condition $\psi(0)=0$.
It's convenient to introduce a new function $\phi$ and a new variable $y$ such that $\psi_{a}(x):=\phi(a^{1/3}x-\frac{E}{a^{2/3}}):=\phi(y) $.
In terms of $\phi$ the Eq.\eqref{eq:dedimsc} becomes
\begin{equation}
    \phi''(y)=y\phi(y).
\end{equation}
The solutions of this equation are well known and are given by linear combinations of Airy functions $\phi(y)=CAi(y)+C'Bi(y)$, however,
the normalization condition imposes $C'=0$. 
The smallest value of $Ea^{-2/3}$ for which $\psi_{a}(0)=CAi(-\frac{E}{a^{2/3}})=0$, is satisfied is known to be $x_{0}\approx 2.338$, thus, let us set $E_{g}:= x_{0}a^{2/3}$ and 
\begin{equation}
    \psi_{a}(x):=\phi(a^{1/3}x-E_{g}a^{-2/3})=\sqrt{\overline{C}a^{1/3}}Ai\left(xa^{1/3} -x_{0}\right), 
\end{equation}
where we defined  $\overline{C}:=\left[ \int_{0}^{\infty}\abs{Ai\left(x-x_{0}\right)}^2 \dd x \right]^{-1}$.\\
We complete the calculation computing $\langle E\rangle(a):=\omega \int_{0}^{\infty} \abs{\psi_{a}(x)}^{2}x \dd x $ and ${\rm UD}(\psi_{a})$:
\begin{align}
    &\langle E\rangle(a)= \omega \overline{C} C_{1}a^{-1/3} \quad  C_{1}:=\int_{0}^{\infty}x\abs{Ai\left(x-x_{0}\right)}^2 \dd x,\, \\
    &{\rm UD}(\psi_{a})=a^{2/3}\overline{C}C_{2} \quad C_{2}:=\int_{0}^{\infty}\abs{\partial_{x}Ai\left(x-x_{0}\right)}^2\dd x,\,
\end{align}
yielding 
\begin{align}
    & \psi_{\langle E \rangle}(x)=C_{E} Ai\left(x\frac{\omega \overline{C}C_{1}}{\langle E\rangle}-x_{0}\right), \\
    & {\rm UD}_{min}(\langle E\rangle)= \overline{C}^{3}C_{1}^{2}C_{2}\frac{\omega^{2}}{\langle E\rangle^{2}}:=\eta^{2}\frac{\omega^{2}}{\langle E\rangle^{2}} \approx 1.888 \frac{\omega^{2}}{\langle E\rangle^{2}}.
\end{align}
For completeness, we report the approximated numerical values of the integrals:
$\overline{C}\approx 2.033; \;\;C_{1}\approx 0.766 ; \;\; C_{2} \approx 0.383$.

For comparison, for the sine-shaped state in Eq.\eqref{Eq:OptNstate} we have 
\begin{equation}
     {\rm UD}(\langle E\rangle)= \frac{\pi^{2}}{4}\frac{\omega^{2}}{\langle E\rangle^{2}} \approx 2.466 \frac{\omega^{2}}{\langle E\rangle^{2}}.
\end{equation}

\subsection{Minimizing the Choi infidelity at fixed QFI} 
\label{sec:QFI}
 The Quantum Fisher information plays an important role in different fields of quantum information science, including quantum metrology, quantum speed limits, quantum thermodynamics and recently resource theory of asymmetry \cite{GiovaMetrology(2011),GiovaSpeed,Q.SpeedReview,DinamicalQFI, QFIBound, QFI-NonAbelian}. It can be defined in great generality, whenever we have a continuous parametrization of density matrices $\rho_{t}$ \cite{QFIDef, BuresQFI} as 
 \begin{align}
     &QFI(\rho_t)=2\partial_{t'}^{2}D_{B}^{2}(\rho_{t'},\rho_{t})_{|t'=t} 
\end{align}
 where $D_{B}(\rho,\sigma):=\sqrt{2(1-\sqrt{F(\rho,\sigma)})}$ is the Bures metric. 
 Here we only consider the case of a Hamiltonian parametrization, i.e. $\rho_t:=\exp[-iHt] \rho \exp[iHt] $, where the quantum Fisher information with respect to parameter $t$, only depends on $\rho $ and $H$, 
 \begin{equation}
     \mathcal{F}(\rho,H):= QFI(\exp[-iHt]\rho \exp[iHt]).
 \end{equation}
 The functional $\mathcal{F}(\rho,H)$ can be expressed in different ways, we recall that it
  equals 4 times the convex roof of the variance \cite{QFI-ConvexRoof}:
\begin{equation} \label{QFIConvexRoofExpression}
  \mathcal{F}(\rho,H)=4\underset{\{p_j,\ket{j}\}_{j} \in \mathbb{E}[\rho]}{\min} \sum_{j} p_j Var[\ket{j},H]= \mbox{Tr}[H^2 \rho]-\underset{\{p_j,\ket{j}\}_{j} \in \mathbb{E}[\rho]}{\max} \sum_{j} p_j\bra{j}H\ket{j}^2,
\end{equation}
where $\mathbb{E}[\rho]:= \bigg\{ \{ p_{j},\ket{j}\}_{j}: \underset{j}{\sum} p_j \ketbra{j}{j} = \rho\bigg\}$ is the set of all ensemble representations of $\rho$.

In \cite{QFIBound} the minimum amount of QFI the battery state must possess to perform an $\epsilon_{wc}$ precise unitary is found. Their construction holds for a generic Hamiltonian  $H_{S}$ and is extremely complex, the battery system is divided into 5 different sub-systems and the proof involves many lemmas and non-trivial calculations. With our approach, finding the optimal state is equivalent to solving a 1D Harmonic oscillator. In fact for pure states the QFI equals 4 times the variance, hence we have 
\begin{equation}
  R_{\mathcal{F}}[\psi(x),x]=4\omega^{2}\left[\int_{0}^{\infty}x^{2}\abs{\psi(x)}^{2}-\left(\int_{0}^{\infty}x\abs{\psi(x)}^{2}\right)^{2}\right].
\end{equation}
If we relax the problem and allow the minimization over every regular $\psi$ defined over the entire real line, we can assume without loss of generality that $\int_{0}^{\infty}x|{\psi(x)}|^{2}=0$, making the QFI functional proportional the $\langle E^{2}\rangle$, hence the full action to minimize reads 
\begin{align} \label{eq:groundQFI}
\mathcal{S}_{\mathcal{F}}[\psi(x)] = \delta^{2} \int_{-\infty}^{\infty} \dd x\, \psi(x) \left( x^{2}a  - \frac{\dd^2}{\dd x^2}  \right) \psi^*(x).
\end{align}
It is then immediate that the optimal solutions are ground states of the harmonic oscillator i.e. $\psi_{\mathcal{F}}(x):=C_{\mathcal{F}}\exp\left[-\frac{2x^{2}\omega^{2}}{\mathcal{F}}\right]$ yielding a lower bound to the optimal unitary defect 
\begin{equation} \label{Eq:pureQFIIneq}
    {\rm UD}_{min}(\mathcal{F})\geq \frac{\omega^{2}}{\mathcal{F}}.
\end{equation}
As we show at the end of the section this \lq\lq unconstrained" optimum can be asymptotically reached just by shifting to the right the wave function of the ground state. 

However, since the QFI is convex in the state we cannot a priori rule out the existence of a mixed state with lower QFI and same unitary defect.
We now prove that such mixed states do not exist.
As it was proved for the first time in \cite{QFI-ConvexRoof}, for any state $\beta$ exist and ensemble $\{ \ket{\beta(i)}\}_{i}$ of pure states such that 
\begin{align}
    & \beta=\underset{i}{\sum}p_{i}\ketbra{\beta(i)}{\beta(i)}, \\
    & \mathcal{F}(\beta,H_{B})=\underset{i}{\sum}p_{i}\mathcal{F}(\ket{\beta(i)},H_{B}) .
\end{align}
Then we use that the functional $\epsilon_{C}(\mathbf{\Phi}_{\beta},\mathcal{V})$ is linear in $\beta$, hence we have 
\begin{align}
    &\frac{\epsilon_{C}(\mathbf{\Phi}_{\beta},\mathcal{V})}{\abs{V_{01}}^{2}}=\underset{i}{\sum}p_{i}\frac{\epsilon_{C}(\mathbf{\Phi}_{\ket{\beta(i)}},\mathcal{V})}{\abs{V_{01}}^{2}}= \underset{i}{\sum}\, p_{i}\,  {\rm UD}(\ket{\beta(i)})+o(\epsilon_{C}(\mathbf{\Phi}_{\beta},\mathcal{V})).
   \end{align}
   We use now Eq.\eqref{Eq:pureQFIIneq} on each component:
\begin{align}
     \underset{i}{\sum}p_{i} {\rm UD}(\ket{\beta(i)}) 
    \geq \underset{i}{\sum}p_{i}\frac{\omega^{2}}{\mathcal{F}(\ket{\beta(i)},H_{B})}
    \geq \frac{\omega^{2}}{\underset{i}{\sum}p_{i}\mathcal{F}(\ket{\beta(i)},H_{B})}=\frac{\omega^{2}}{\mathcal{F}(\beta,H_{B})}. 
\end{align}
Where the last inequality holds thanks to the convexity of $\frac{1}{x}$.\\
This result can be obtained from theorem 1 of \cite{QFIBound} using the relationship between worst-case infidelity and Choi infidelity, however, this section is meant to illustrate how our construction can be applied not only to concave resources or pure states, instead can find optimal bound over all states of convex resources too. 

We now prove that the inequality in Eq. \eqref{Eq:pureQFIIneq}
is infact an equality, chosing a specific family of functions $\psi_{\mathcal{F}}$  defined on $[0,\infty]$. Let 
\begin{equation} \label{Eq.QFI-optimalstate}
    \psi_{\mathcal{F}}(x):=C'_{\mathcal{F}} s(x) \exp\left[-\frac{2(x-\frac{\mathcal{F}}{\omega^{2}})^{2}\omega^{2}}{\mathcal{F}}\right],
\end{equation}
where $s(x)$ is a smooth function such that $s(0)=0$ and $\underset{x\rightarrow \infty}{\lim}x^{2}(s(x)-1)=0$, e.g., $s(x)=1-e^{-x^2}$ and  $C'_{\mathcal{F}}$ is the new normalization constant. It is immediate that in the $\mathcal{F} \rightarrow \infty$ limit the latter satisfies ${\rm UD}_{min}(\mathcal{F})= \frac{\omega^{2}}{\mathcal{F}}+o(\mathcal{F}^{-1})$, proving the thesis.
\subsection{Performances of coherent states}
 We now consider a semi-classical pulse, i.e. a battery prepared in a coherent state with parameter $\alpha$. Since the phase can always be reabsorbed into the interaction, consider coherent state of real $\alpha>0$. Asymptotically in the average number of photons $\langle n \rangle =\abs{\alpha}^{2} \rightarrow \infty$ the energy distribution of a coherent state is Gaussian with both average and variance equal to $\abs{\alpha}^{2}$, indeed we have
\begin{equation}
    \braket{n|\alpha}= e^{\frac{\alpha^{2}}{2}}\frac{\alpha^{n}}{\sqrt{n!}}= C_{\alpha} \psi_{\alpha}\left(n\right)+O(\alpha^{-3}),
\end{equation}
where $\psi_{\alpha}(x)=\exp[\frac{-(x-\alpha^{2})^{2}}{4\alpha^{2}}]$ and $C_{\alpha}=\sqrt{\sumab{n=0}{\infty}\abs{\psi_{\alpha}\left(n \right)}^{2}}$ is the normalization constant.
It is now apparent that the state in Eq.\eqref{Eq.QFI-optimalstate} is asymptotically a coherent state (except for the exponentially small modification around $0$ due to the regularizing $s(x)$ function).
Thus coherent states are optimal in the QFI regard, as for it's performances with respect to energy we have instead 
\begin{align}
    & {\rm UD}(\alpha):=C_{\alpha}^{2}\int_{0}^{\infty}\abs{\psi_{\alpha}'(x)}^{2}dx  =C_{\alpha}^{2}\int_{0}^{\infty}\frac{(x-\alpha^{2})}{4\alpha^{2}}\abs{\psi_{\alpha}(x)}^{2}dx =\frac{1}{4\alpha^{2}}+o(\alpha^{-2})=\frac{\omega}{4\langle E \rangle} +o\left(\frac{\omega}{\langle E \rangle}\right).
\end{align}
This reveals that the Choi infidelity scales as the inverse of the average energy, while we proved that the optimal scaling is the inverse of the square root of the average energy, see the first row of the Table I in the main text.

\section{Qudit target systems} \label{Sec:Qudit}
In previous sections, we have restricted our analysis to a qubit target system. Here we generalize the results for a target system of $d$ equally spaced energy levels. We consider a Hamiltonian of the target system of the form $ H_{S}=\omega \underset{n=0}{\overset{d-1}{\sum}} n\underset{S}{\ketbra{n}{n}} $ and a battery Hamiltonian
\begin{equation}\label{H_{B}^{(1)}}
    H_{B}:=\omega \underset{n=0}{\overset{\infty}{\sum}}n\underset{B}{\ketbra{n}{n}}.
\end{equation}
The choice above is quite natural, as we already discussed for the $d=2$ case. 
 The interaction is a straightforward generalization of the $d=2$ case:
\begin{equation} \label{Def:U_{SB}^{(1)}}
U_{SB}=I_{SB}^{d-1} +\underset{n=d-1}{\overset{\infty}{\bigoplus}} U^{(n)}.
\end{equation}
Where $I_{SB}^{d-1}:=I_{S}\otimes \sumab{j=0}{d-2}\underset{B}{\ketbra{j}{j}}$ is the identity on the first $d-1$ levels of $H_{B}$ and $U^{(n)}:=\underset{i,j=0}{\overset{d-1}{\sum}}\underset{B}{\ketbra{n-i}{n-j}} \otimes \underset{S}{\ketbra{i}{j}}V_{ij}$.
The first part of the calculations is formally identical to the case $d=2$, indeed 
we have for any $n\geq d-1$
\begin{align}
& V_{S}^{\dagger}K^{(n)}=  \beta_{n}I_{S}+J_S^{(n)},
\end{align}
where $J_S^{(n)}:=\underset{i,j,j'=0}{\overset{d-1}{\sum}}\underset{S}{\ketbra{i}{j}} (\beta_{n+i-j'}-\beta_{n}) V_{ij'} V_{jj'}^{*}$. At this stage, we assume that the coefficient are smooth, i.e. that exists a regular function $\psi$ such that $\beta_{n}^{(\delta)}:=C_{\delta}\psi(n\delta)$ and are interested in the $\delta \rightarrow 0$ limit. As in the $d=2$ case populating the first $d-1$ level of the battery is strongly sub-optmal, so we assume that $\psi(0)=0$, implying $\abs{\beta_{0}}^{2},..\abs{\beta_{d-1}}^{2}\propto \delta^{3}$, and making their contribution to the Choi infidelity negligible. Then we have  
\begin{equation} \label{eq:Choi-InfIntermediatd}
    \epsilon_{C}(\mathbf{\Phi}_{\beta^{(\delta)}},\mathcal{V})=-\frac{2}{d}\underset{n>d-2}{\sum} \text{Re}\left[C_{\delta}\psi(\delta n)^{*}\mbox{Tr}[J_S^{(n)}]\right]-\underset{n>d-2}{\sum} \frac{1}{d^{2}} \abs{\mbox{Tr}[J_S^{(n)}]}^{2}+O(\delta^{3}),
\end{equation}
the trace of $J_S^{(n)}$ reads 
\begin{equation} \label{Eq:Trace-J-d}
    \mathrm{Tr}[J_S^{(n)}] = \sum_{j=0}^{d-1} \left[ (\beta_{n+j}-\beta_{n}) \sum_{\substack{l-k=j}} \lvert V_{l,k} \rvert^{2} - (\beta_{n}-\beta_{n-j}) \sum_{\substack{l-k=j}} \lvert V_{k,l} \rvert^{2} \right].   
\end{equation}
 
Hence, we can make the substitutions
\begin{align}
& \beta^{(\delta)}_{n+j}-\beta^{(\delta)}_{n}=C_{\delta}\left( j \delta \psi'(n\delta)+\frac{1}{2}\delta^{2} j^{2} \psi''(n\delta) +O(\delta^{3})\right),  \\
& \beta^{(\delta)}_{n}-\beta^{(\delta)}_{n-j}=C_{\delta}\left( j \delta \psi'(n\delta)- \frac{1}{2}\delta^{2} j^{2} \psi''(n\delta)+O(\delta^{3})\right),
\end{align}
and plug them in Eq.\eqref{Eq:Trace-J-d}, getting 
\begin{equation} \label{AsymptoticChoiFid (d level)}
    \epsilon_{C}(\mathbf{\Phi},\mathcal{V})=-\frac{2}{d}\abs{C_{\delta}}^{2}\text{Re}\underset{n>0}{\overset{N-1}{\sum}}\sum_{j=0}^{d-1} \psi^{*}(n\delta) \frac{j^{2}}{2}\psi''(n\delta)\left[  \sum_{\substack{l-k=j}} \lvert V_{l,k} \rvert^{2} +  \sum_{\substack{l-k=j}} \lvert V_{k,l} \rvert^{2} +O(\delta^{3})\right] .
\end{equation}
where the terms proportional to $\psi'(n\delta)$ disappeared thanks to a summation by parts and vanishing boundary conditions. Similarly one can show that $\underset{n>0}{\sum}\abs{\mbox{Tr}[J_S^{(n)}]}^{2}=O(\delta^{4})$ and thus neglect this contribution.
We conclude that for smooth battery states the Choi infidelity is once again proportional to the unitary defect: 
\begin{equation} \label{Eq:GeneralizedChoiComp}
    \epsilon_{C}(\mathbf{\Phi}_{\beta^{\delta}},\mathcal{V})=\mathcal{A}[V_{S}] \delta^{2}\,  {\rm UD}(\psi)+O\left( \delta^{3} \right).
\end{equation}
Where $\mathcal{\mathcal{A}}[V_{S}]:=\underset{j=1}{\overset{d}{\sum}}\frac{j^{2}}{d}\underset{\abs{l-k}=j}{\sum} \abs{V_{l,k}} ^{2} $ is a very intuitive quantifier of asymmetry that for $d=2$ reduces to $\abs{V_{01}}^{2}$ as expected.

\subsection{An alterative scheme for gates with invariant subspaces block diagonal in energy.}
In this sub-section, we highlight that for $d>2$ the above construction is not always optimal, as there are target gates $V_{S}$ for which more efficient battery state and interaction exist. As an example, consider a gate that acts non-trivially only on the ground and maximally exited states, i.e. on  $\mathcal{H}_{2}:=\text{Span}\{\ket{0}_{S};\ket{d-1}_{S}\}$. Then  $\mathcal{H}_{S}=\mathcal{H}_{2}\oplus \mathcal{H}_{\perp} $, and $V_{S}:= V_{2}\oplus I_{\perp}$, where $V_{2}$ is a unitary acting on the subspace $\mathcal{H}_{2}$.
We now compare the performances of the previous scheme of implementation, with an alternative implementation that ignores all the levels but the extremal ones, and involves a battery matching their energy gap, essentially reducing the task to the qubit problem.

The first scheme (I) consist of using a battery Hamiltonian $H_{B}^{(\rm{I})}$ given in Eq.(\ref{H_{B}^{(1)}}) with $H_{B}^{(\rm{I})}\ket{n^{(\rm{I})}}_{B}=\omega n \ket{n^{(\rm{I})}}_{B} $ , the interaction $U_{SB}^{(\rm{I})}$ in Eq.(\ref{Def:U_{SB}^{(1)}}) and the battery state 
\begin{equation}
    \ket{\beta^{(\rm{I})}_{\delta}}_{B}=C_{\delta}\sum_{n=0}^{\infty} \psi\left(n\delta \right)\ket{n^{(\rm{I})}}_{B}.
\end{equation}
where $\psi$ is smooth and vanishing in $0,\infty$ as in the main text. The channel of scheme (I) is thus defined as 
\begin{equation}
    \mathbf{\Phi}_{\delta}^{(\rm{I})}(\cdot):=Tr_{B}[\mathcal{U}_{SB}^{(\rm{I})}( \cdot \otimes \beta_{\delta}^{(\rm{I})})].
\end{equation}
 It's Choi infidelity can be computed trough Eq.\eqref{Eq:GeneralizedChoiComp} and reads  
\begin{equation}
    \epsilon_{C}(\mathbf{\Phi}^{(\rm{I})}_{\delta},\mathcal{V}_{S})=\mathcal{A}[V_{S}]\delta^{2} {\rm UD}(\psi)+O(\delta^{3})=\frac{2(d-1)^{2}}{d}\delta^{2} {\rm UD}(\psi)\abs{\bra{0}V_{S}\ket{d-1}}^{2}+O(\delta^{3}).
\end{equation}

The second scheme (II) is conceptually identical, except for the gap energy of the battery Hamiltonian, specifically with
 \begin{equation}
    H_{B}^{(\rm{II})}=\omega (d-1) \underset{n=0}{\overset{N-1}{\sum}}n\underset{B}{\ketbra{n^{(\rm{II})}}{n^{(\rm{II})}}},
\end{equation}
and the interaction is defined as
\begin{align}
   &U_{SB}^{(\rm{II})}=\underset{B}{\ketbra{0}{0}}\otimes \underset{S}{\ketbra{0}{0}} + \underset{n>0}{\bigoplus}U^{(n)} + I_{B}\otimes \Pi_{\perp}.
\end{align}
Where $U^{(n)}$ act on the two dimensional degenerate subspace of energy $(d-1)n \omega $, precisely 
\begin{equation}
    U^{(n)}:=\suma{ij=0,1}\underset{B}{\ketbra{n-i}{n-j}}\otimes \underset{S}{\ketbra{(d-1)i}{(d-1)j}}\bra{i}V_{2}\ket{j}
\end{equation}
and $\Pi_{\perp}$ is the projector onto the $\mathcal{H}_{\perp}$ subspace.

Finally, we choose for a fair comparison the battery state $\beta^{(\rm{II})}$ to have the same ``shape":
\begin{equation}
    \ket{\beta^{(\rm{II})}_{\delta}}_{B}=C_{\delta}\sum_{n=0}^{\infty} \psi\left(\delta n\right)\ket{n^{(\rm{II})}}_{B}.
\end{equation}
As expected, the Kraus operators of scheme (II) are completely identical to the qubit case, using $U_{SB}^{(\rm{II})} = U_{SB}^{\{0,d-1\}}\oplus (\Pi_\perp\otimes I_B)$ where $\Pi_\perp =\sum_{i=2}^{d-1} \ketbra{i}_S$,
\begin{align}
   & V_{S}^{\dagger}K_{n}^{(\rm{II})} = V_S^\dag \left(\bra{n} U_{SB}^{\{0,d-1\}} \ket{\beta}  + \bra{n} \Pi_\perp\otimes I_B \ket{\beta} \right) = \big(V^{(\rm{II})\dag} \bra{n} U_{SB}^{\{0,d-1\}} \ket{\beta}\Big)  \oplus \beta_n \Pi_\perp
   =\psi(n\delta)I_{S}+J_{n}^{(\rm{II})}
\end{align}
where the \lq\lq error" operators are given by 
\begin{align}
   &   J_{n}^{(\rm{II})}:=\underset{ijj'=0,1}{\sum}\underset{S}{\ketbra{i(d-1)}{j(d-1)}} [\psi\left((n+i-j')\delta\right)-\psi(n\delta)] V_{ij'}^{(\rm{II})} V_{jj'}^{*(\rm{II})}.
\end{align}
Let $\Delta_{n}:=tr[J_{n}^{(\rm{II})}]$, then recalling that $\abs{\Delta_{n}}^{2}\propto \delta^{4}$ we have 
\begin{align}
    & \epsilon_{C}(\mathbf{\Phi}^{(\rm{II})}_{\delta},\mathcal{V}_{S})=1-\frac{1}{d^{2}}\suma{n>0}\abs{ tr[V_{S}^{\dagger}K_{n}^{(\rm{II})}]}^{2}= 1- \frac{1}{d^{2}}\suma{n>0}(\abs{\psi(n\delta)}^{2}tr[I_{S}]^{2}+2tr[I_{S}]\text{Re}[\psi(n\delta)\Delta_{n}])+O(\delta^{4})= \\
    & -\suma{n>0} \frac{2}{d}\text{Re}[\psi(n\delta)\Delta_{n}]+O(\delta^{4}).
\end{align}
The rest of the calculation is done by comparison with the qubit case, the only difference is this $\frac{2}{d}$ additional factor.

Thus the asymptotic Choi infidelity can essentially be found thanks to Eq.(11) of the main text, 
\begin{align}
    \epsilon_{C}(\mathbf{\Phi}^{(\rm{II})}_{\delta},\mathcal{V}_{S})=\frac{2}{d}\delta^{2}{\rm UD}(\psi)\abs{\bra{0}V_{S}\ket{d-1}}^{2}+O(\delta^{3})
\end{align}

To complete the comparison we list the resources possessed by the two battery states, evaluated at equal Choi infidelity, i.e. fixing $\delta^{(\rm{II})}=(d-1)\delta^{(\rm{I})}$. For energy and QFI there is no advantage in scheme (\rm{II}), as the boost in Choi Infidelity is compensated by the larger gap energies of the Hamiltonian. However if $\psi(x)$ has compact support the number of levels used is reduced by a factor $d-1$. More in general, for any smooth function $\psi$ there is a net saving of coherence as quantified by the entropic coherence \cite{EntropicCoherence(2025)} 
\begin{align}
& C(\beta^{(\rm{II})}_{(d-1)\delta},H_{B}^{(\rm{II})})=C(\beta^{(\rm{I})}_{\delta},H_{B}^{(\rm{I})})-\log(d-1)+o(1).
\end{align}
 In conclusion, for $d>2$ scheme (II) uses less coherence at fixed precision, while being equally efficient in terms of (dimensional) energetic resources. 
 This example shows a general important point: if $V_{S}$ has non trivial (i.e. non one- dimensional) invariant subspaces that are block diagonal in the energy eigen-base, alternative and potentially more efficient schemes of implementation exist. Thus proving the optimality of a certain interaction and Hamiltonian in the $d>2$ case is considerably harder. This is especially important in the case of a system $S$ made of multiple qubits, the total Hamiltonian $H_{S}$ has large degenerate eigenspaces.


\end{widetext}

\end{document}